\documentclass[11pt]{article}
\usepackage{amsmath, amssymb, amsthm}
\usepackage{geometry}
\usepackage{cite}

\newtheorem{lemma}{Lemma}
\newtheorem{theorem}{Theorem}
\newtheorem{corollary}{Corollary}
\newtheorem{remark}{Remark}
\newtheorem{axiom}{Axiom}
\newtheorem{definition}{Definition}
\title{Homogeneity, Isotropy, and Determinism Force a Quadratic Spacetime Interval: A Derivation of Relativity Without Light}
\author{Deon Nicholas\\[4pt]
{\normalsize\itshape Independent Researcher, San Francisco, CA}\\[2pt]
{\normalsize\ttfamily deon.c.nicholas@gmail.com}}
\date{June 27, 2026}
\begin{document}
\maketitle
\begin{abstract}
We show that a few physical principles---smoothness, homogeneity, isotropy, and the determinism of inertial motion---force the invariant interval governing the geometry of spacetime to reduce to a quadratic form, without presupposing the existence of light or electromagnetic phenomena. Formalizing these as axioms about an ``invariant interval'' function $D:\mathbb{R}^n\to\mathbb{R}$ ($n\geq 3$), we find that smoothness and homogeneity force $D$ to be homogeneous of degree $p > 0$; determinism---that an inertial worldline be uniquely fixed by its initial point and direction---makes its geodesics straight lines; and isotropy---that the isometry group act transitively on each level set, with the stabilizer of a reference direction reversing every transverse direction---forces $D(v) = C\,(v^T S v)^{p/2}$ for a nondegenerate symmetric matrix $S$ and $p > 0$, with $p = 2$ (so that $D$ is exactly quadratic) when $S$ is indefinite. Thus the only admissible invariant intervals are powers of nondegenerate quadratic forms. The signature of $S$ is otherwise free: the definite case is Euclidean geometry and the indefinite case includes both Minkowski and ultrahyperbolic geometries, the two cases distinguished by the absence or presence of a null cone.

\medskip
\noindent\textbf{Keywords:} invariant interval; spacetime geometry; special relativity; principle of relativity; isometry groups; quadratic forms; free mobility; Finsler structures.
\end{abstract}

\section{Introduction}

Standard expositions of special relativity begin by postulating the constancy of the speed of light and derive the Lorentz transformations from this assumption. A well-established alternative---``relativity without light''---dispenses with the light postulate entirely. Beginning with von Ignatowsky \cite{Ignatowsky1910} and Frank and Rothe \cite{FrankRothe1911}, and developed further by Berzi and Gorini \cite{BerziGorini1969}, L\'evy-Leblond \cite{LevyLeblond1976}, and others \cite{Pal2003}, these derivations show that smoothness, homogeneity, and isotropy of space, together with the group structure of transformations between inertial frames, force the Lorentz transformations (or the Galilean transformations as a degenerate case). The speed of light appears not as an axiom but as an empirical parameter---or as a consequence.

These approaches, however, all begin with \emph{frames and velocities}: they assume inertial observers, coordinatize their relative motion, and derive the transformation law connecting them. Concepts like velocity, reference frames, and boosts are primitive.

In this paper, we start at a more primitive level. Our single primitive is an ``invariant interval'' function $D : \mathbb{R}^n \to \mathbb{R}$, capturing the invariant separation between events. The invariant interval is the central invariant of special relativity---the spacetime analogue of distance, whose value all inertial observers agree upon \cite{Misner1973}; but where it is ordinarily \emph{derived} from the Lorentz transformations, we take it as the primitive and ask what geometry it forces. We do not assume observers, frames, velocities, or any \emph{a priori} distinction between ``time'' and ``space.'' Instead, we impose a few structural principles on $D$---\emph{smoothness}, \emph{homogeneity}, \emph{isotropy}, and the \emph{determinism} of inertial motion---and show that these force $D$ to be a power of a nondegenerate quadratic form. The transformation group, the possibility of a null cone and an invariant speed, and the kinematic structure of special relativity are all consequences.

Our starting point is in several respects more primitive than in earlier light-free derivations. We assume neither a norm nor a metric: the interval $D$ may take either sign, need not obey a triangle inequality, and is not a distance function. Where those derivations begin with inertial frames and the group law relating their velocities, we begin only with $D$, a symmetry group encoding its isotropy, and a determinism principle; the concrete transformation laws---and, in the Lorentzian case, the null cone and causal structure---then emerge from the recovered geometry rather than being built in. From these principles we still recover the complete classification, across every nondegenerate signature---definite (Euclidean) and indefinite (Lorentzian and ultrahyperbolic) alike. The catalogue of geometries this yields is itself well-established---the Cayley--Klein kinematics and the possible-kinematics classification of Bacry and L\'evy-Leblond \cite{Yaglom1979,BacryLevyLeblond1968}; what is new here is the derivation of that catalogue from a single interval primitive, with the space--time split and the transformation group emerging rather than assumed.

We shall show that homogeneity (there is no preferred location or length scale) implies that $D$ depends only on displacement vectors and satisfies $D(\lambda v)=\lambda^p D(v)$ for some $p > 0$. From $D$ we construct a Lagrangian and define geodesics as stationary curves. A determinism principle---that the inertial worldline through a point is uniquely determined by its direction---forces geodesics to be straight lines. The isometries of $D$ are then shown to be affine, and isotropy, together with determinism, forces the level sets to be genuine (nondegenerate) ellipsoids or hyperboloids. The result is $D(v) = C\,(v^T S v)^{p/2}$ with $S$ a nondegenerate symmetric matrix; its signature is otherwise free, the definite case being Euclidean geometry and the indefinite case including Minkowski geometry, with the corresponding Lorentz transformations emerging as the symmetries of $D$.

\section{Mathematical Framework and Axioms}

\subsection{Basic Setup and Physical Motivation}

We model the universe as an $n$-dimensional vector space, which we identify with $\mathbb{R}^n$ (typically $n=4$, though we require only $n \geq 3$). We postulate that the fundamental physical principles of \emph{smoothness}, \emph{homogeneity}, \emph{isotropy}, and the \emph{determinism of inertial motion} hold; the symmetry content---of which isotropy is the strongest part---is our formalization of the \emph{principle of relativity}, that the laws of physics are the same in all inertial frames.

Central to our approach is the existence of invariant measures in the universe---quantities that are universal across all reference frames. Specifically, we postulate that there is a way to measure an invariant interval between two events that fundamentally represents a law of physics, independent of any observer. This interval encodes the geometric structure of the universe itself.

We therefore postulate the existence of a function $D$ that encodes this law of physics. The properties of $D$ are constrained by the physical principles above, which we now formalize as axioms.

\begin{axiom}[Smoothness]
\label{axiom:smoothness}
The interval function $D: \mathbb{R}^n \to \mathbb{R}$ is continuous on all of $\mathbb{R}^n$ and smooth ($C^\infty$) on $\mathbb{R}^n \setminus \{0\}$.
\end{axiom}

\textbf{Physical motivation:} Physical phenomena such as velocities, accelerations, and momenta are naturally expressed through derivatives. To build a coherent framework where such concepts are well-defined, we postulate that the laws of physics---and thus the interval function $D$---are smooth, with continuous derivatives of all orders. We ask for continuity everywhere, but only for differentiability away from the origin.

\begin{axiom}[Translational Homogeneity]
\label{axiom:translational}
The interval between two events $u, v \in \mathbb{R}^n$ depends only on their difference, not on their absolute positions:
\[
d(u,v) := D(u-v).
\]
\end{axiom}

\textbf{Physical motivation:} The laws of physics are the same everywhere. There is no preferred origin or distinguished event. This invariance under translation means intervals are determined solely by the displacement vector between events.

Without loss of generality, we study the single-argument function $D(v)$, understanding that the interval between events separated by displacement $v$ is $D(v)$. In particular $D(0) = d(x,x) = 0$: the interval between an event and itself, separated by zero displacement, vanishes. We note that $D$ may take any sign---positive, negative, or zero.

\begin{axiom}[Scale Homogeneity]
\label{axiom:scale}
For every $\lambda > 0$ and every $v \in \mathbb{R}^n$ with $D(v) \neq 0$,
\[
\frac{D(\lambda v)}{D(v)} = f(\lambda)
\]
for some function $f: (0,\infty) \to \mathbb{R}$.
\end{axiom}

\textbf{Physical motivation:} There is no preferred length scale built into the interval. Consider two arbitrary intervals, represented by displacement vectors $x$ and $y$ with $D(x), D(y) \neq 0$. The ratio $D(x)/D(y)$ captures a geometric relationship between these intervals. Now suppose we scale both intervals by the same factor $\lambda > 0$, obtaining $\lambda x$ and $\lambda y$. Since there is no fundamental scale, the relationship between the scaled intervals should be the same as between the original intervals:
\[
\frac{D(\lambda x)}{D(\lambda y)} = \frac{D(x)}{D(y)}.
\]
Re-arranging yields
\[
\frac{D(\lambda x)}{D(x)} = \frac{D(\lambda y)}{D(y)}.
\]
Since the left-hand side depends only on $x$ while the right-hand side depends only on $y$, and this must hold for arbitrary $x$ and $y$ (with $D(x), D(y) \neq 0$), both sides must equal a constant $f(\lambda)$ depending only on $\lambda$.

\begin{lemma}
\label{lem:power-law}
There exists a real exponent $p > 0$ such that $D(\lambda v) = \lambda^p D(v)$ for all $\lambda > 0$ and all $v$ with $D(v) \neq 0$. By continuity of $D$, this extends to $D(\lambda v) = \lambda^p D(v)$ for all $\lambda > 0$ and all $v \neq 0$.
\end{lemma}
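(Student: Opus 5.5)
The plan is to study the function $f$ from Axiom~\ref{axiom:scale}, show that it is a continuous, positive, multiplicative function on $(0,\infty)$, and conclude that it is a power. Throughout, fix some $v_0$ with $D(v_0)\neq 0$; if no such $v_0$ exists then $D\equiv 0$ and the lemma holds trivially for any $p>0$. Then $f(\lambda)=D(\lambda v_0)/D(v_0)$. By Axiom~\ref{axiom:smoothness} this is continuous in $\lambda$, and $f(1)=1$.

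\textbf{Step 1: $f$ never vanishes, and it is multiplicative.} Take any $\mu$ with $f(\mu)\neq 0$. For every $v$ with $D(v)\neq 0$ we then have $D(\mu v)=f(\mu)D(v)\neq 0$. Applying the axiom to $\mu v$ gives $D(\lambda\mu v)=f(\lambda)D(\mu v)$. Dividing by $D(v)$ yields $f(\lambda\mu)=f(\lambda)f(\mu)$ for all $\lambda>0$.

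Next I rule out zeros of $f$. Suppose $f$ vanishes somewhere in $(1,\infty)$. Let $\lambda_0$ be the infimum of its zeros there. Continuity and $f(1)=1$ give $\lambda_0>1$ and $f(\lambda_0)=0$. But $1<\sqrt{\lambda_0}<\lambda_0$, so $f(\sqrt{\lambda_0})\neq 0$. Multiplicativity at $\mu=\sqrt{\lambda_0}$ then gives $f(\lambda_0)=f(\sqrt{\lambda_0})^2>0$, a contradiction. The symmetric argument, using the supremum of zeros in $(0,1)$, handles $\lambda<1$. Hence $f$ is nowhere zero, multiplicativity holds for all $\lambda,\mu>0$, and $f(\lambda)=f(\sqrt\lambda)^2>0$.

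\textbf{Step 2: $f$ is a power.} Set $g(t)=\log f(e^t)$. This is a continuous solution of Cauchy's equation $g(s+t)=g(s)+g(t)$, so $g(t)=pt$ for some real $p$, i.e.\ $f(\lambda)=\lambda^p$. I expect this step to be routine.

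\textbf{Step 3: $p>0$.} By continuity of $D$ at the origin and $D(0)=0$, we get $D(\lambda v_0)=\lambda^p D(v_0)\to 0$ as $\lambda\to 0^+$. This fails if $p=0$, since the limit would be $D(v_0)\neq 0$, and also if $p<0$, since the expression would blow up. So $p>0$.

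\textbf{Step 4: extension to all $v\neq 0$.} It remains to treat $v\neq 0$ with $D(v)=0$, where I must show $D(\lambda v)=0$. If instead $D(\lambda v)\neq 0$, the axiom applied to $\lambda v$ with factor $1/\lambda$ gives $D(v)=\lambda^{-p}D(\lambda v)\neq 0$, a contradiction. This step does not even need continuity.

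The main obstacle is Step 1, specifically excluding zeros of $f$ before multiplicativity is available everywhere. The infimum/square-root argument above is what I would use there.
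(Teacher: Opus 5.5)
Your proof is correct and follows the same route as the paper: multiplicativity of $f$, logarithms and the additive Cauchy equation, continuity at the origin to force $p>0$, and Axiom~\ref{axiom:scale} with factor $1/\lambda$ for null vectors. Your Step~1 is more careful than the paper's, which asserts $f(\lambda\mu)=f(\lambda)f(\mu)$ for all $\lambda,\mu>0$ and then invokes it at a putative zero $\lambda_0$ of $f$, even though applying the axiom to $\mu v$ requires $D(\mu v)\neq 0$; restricting to $f(\mu)\neq 0$ and using your infimum/square-root argument closes that gap, and it also gives positivity directly via $f(\lambda)=f(\sqrt{\lambda})^2>0$.
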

\begin{proof}
From Axiom~\ref{axiom:scale}, we have $D(\lambda v) = f(\lambda) D(v)$ for all $\lambda > 0$ and $v$ with $D(v) \neq 0$.

Since $D$ is smooth (Axiom~\ref{axiom:smoothness}), the function $f(\lambda) = D(\lambda v)/D(v)$ for any fixed $v$ with $D(v) \neq 0$ is continuous in $\lambda$.

Consider the functional equation satisfied by $f$. For any $\lambda, \mu > 0$ and any fixed $v$ with $D(v) \neq 0$:
\begin{align*}
D(\lambda \mu v) &= f(\lambda \mu) D(v) \\
D(\lambda \mu v) &= f(\lambda) D(\mu v) = f(\lambda) f(\mu) D(v).
\end{align*}
Therefore, $f(\lambda \mu) = f(\lambda) f(\mu)$ for all $\lambda, \mu > 0$.

This is Cauchy's functional equation on $(0, \infty)$. Any continuous solution has the form $f(\lambda) = \lambda^p$ for some real constant $p$.

To see this, first note $f > 0$. It is nowhere zero, since $f(\lambda_0) = 0$ would give $f(\lambda) = f(\lambda_0)\,f(\lambda/\lambda_0) = 0$ for all $\lambda$, contradicting $f(1) = 1$; and a continuous, nowhere-vanishing function on the connected interval $(0, \infty)$ with $f(1) = 1 > 0$ is positive throughout. Now, taking logarithms, let $g(x) = \log f(e^x)$. Then
\begin{align*}
g(x+y)
&= \log f(e^{x+y}) \\
&= \log f(e^x e^y) \\
&= \log (f(e^x) f(e^y)) \\
&= \log f(e^x) + \log f(e^y) \\
&= g(x) + g(y)
\end{align*}
for all $x, y \in \mathbb{R}$, and $g$ is continuous (since $f$ is continuous and positive).

The additive Cauchy equation $g(x+y) = g(x) + g(y)$ with continuity forces $g$ to be linear: for any rational $r = m/n$ (with $m, n$ integers, $n > 0$), we have $ng(r) = g(nr) = mg(1)$, so $g(r) = rg(1)$. By continuity and density of rationals in $\mathbb{R}$, we conclude $g(x) = cx$ for all $x \in \mathbb{R}$, where $c = g(1)$. Setting $p = c$, we have $\log f(e^x) = px$, which gives $f(e^x) = e^{px}$, and substituting $\lambda = e^x$ yields $f(\lambda) = \lambda^p$.

We require $p > 0$: if $p \leq 0$ then for any $v$ with $D(v) \neq 0$ the value $D(\lambda v) = \lambda^p D(v)$ either diverges (if $p < 0$) or stays constant at $D(v) \neq 0$ (if $p = 0$) as $\lambda \to 0^+$, whereas continuity at the origin forces $D(\lambda v) \to D(0) = 0$. This contradiction rules out $p \leq 0$.

Therefore, $D(\lambda v) = \lambda^p D(v)$ for all $\lambda > 0$ and $v$ with $D(v) \neq 0$. For vectors $v$ with $D(v) = 0$ the identity still holds, both sides being zero: if instead $D(\lambda v) \neq 0$ for some $\lambda > 0$, then Axiom~\ref{axiom:scale} applied to the vector $\lambda v$ with scale factor $1/\lambda$ gives $D(v) = f(1/\lambda)\,D(\lambda v)$ with $f(1/\lambda) = (1/\lambda)^p \neq 0$, forcing $D(v) \neq 0$---a contradiction. Hence $D(\lambda v) = 0 = \lambda^p D(v)$.
\end{proof}

We say that $D$ is a \emph{homogeneous function of degree $p$}.

\begin{axiom}[Evenness]
\label{axiom:isotropy}
The interval function is invariant under reversal of direction. For every $v \in \mathbb{R}^n$,
\[
D(v) = D(-v).
\]
\end{axiom}

\textbf{Physical motivation:} Reversing the displacement between two events does not change their interval. A much stronger isotropy condition will appear in Axiom~\ref{axiom:isometry}.

\begin{lemma}[Extension to negative scalars]
\label{lem:absolute-homogeneity}
For all $\mu \neq 0$ and all $v \neq 0$,
\[
D(\mu v) = |\mu|^p D(v).
\]
In particular, $D(0) = 0$: by continuity at the origin (Axiom~\ref{axiom:smoothness}), letting $\mu \to 0$ gives $D(0) = \lim_{\mu \to 0} |\mu|^p D(v) = 0$, since $p > 0$.
\end{lemma}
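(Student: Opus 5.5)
The proof splits on the sign of $\mu$ and combines Lemma~\ref{lem:power-law} with Axiom~\ref{axiom:isotropy}. Fix $v \neq 0$ and $\mu \neq 0$.

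If $\mu > 0$, then $|\mu| = \mu$. Since $v \neq 0$, the extended form of Lemma~\ref{lem:power-law} applies directly and gives $D(\mu v) = \mu^p D(v) = |\mu|^p D(v)$.

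If $\mu < 0$, write $\mu = -|\mu|$, so that $\mu v = -(|\mu| v)$. Evenness (Axiom~\ref{axiom:isotropy}) applied to the vector $|\mu| v$ gives $D(\mu v) = D(|\mu| v)$. Now $|\mu| > 0$ and $v \neq 0$, so the positive case yields $D(|\mu| v) = |\mu|^p D(v)$. Together these give the claimed identity.

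Lemma~\ref{lem:power-law} already covers vectors with $D(v) = 0$, so no separate treatment of null vectors is needed.

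For $D(0) = 0$: fix any $v \neq 0$, which exists since $n \geq 3$. The path $\mu \mapsto \mu v$ tends to $0$ as $\mu \to 0$. Continuity of $D$ at the origin (Axiom~\ref{axiom:smoothness}) then gives
\[
D(0) = \lim_{\mu \to 0} D(\mu v) = \lim_{\mu \to 0} |\mu|^p D(v) = 0,
\]
where the last step uses $p > 0$ from Lemma~\ref{lem:power-law}. This agrees with the earlier remark that $D(0) = d(x,x) = 0$, and it shows that the identity $D(\mu v) = |\mu|^p D(v)$ also holds in the degenerate cases $\mu = 0$ or $v = 0$.

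The argument is short and has no real obstacle. The only point needing care is that Lemma~\ref{lem:power-law} must be used in its extended form, valid for all $v \neq 0$ including null vectors, rather than only for $v$ with $D(v) \neq 0$.
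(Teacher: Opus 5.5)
Your proof is correct and follows the paper's argument: the case $\mu>0$ is Lemma~\ref{lem:power-law} in its extended form, and the case $\mu<0$ reduces to it through evenness (Axiom~\ref{axiom:isotropy}), via $D(\mu v)=D(|\mu|v)=|\mu|^pD(v)$. Your continuity argument for $D(0)=0$ is the same one the statement itself gives.
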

\begin{proof}
For $\mu > 0$, this is Lemma~\ref{lem:power-law}. For $\mu < 0$, write $\mu = -|\mu|$ and compute:
\[
D(\mu v) = D(-|\mu| v) = D(|\mu| v) = |\mu|^p D(v),
\]
where the second equality uses Axiom~\ref{axiom:isotropy} and the third uses Lemma~\ref{lem:power-law} with $\lambda = |\mu| > 0$.
\end{proof}

\section{Geodesics and Inertial Motion}

We have established that $D$ is an even, homogeneous function of degree $p > 0$ encoding the invariant interval structure of the universe. We now show how $D$ naturally gives rise to the notion of \emph{geodesics}---the fundamental paths representing inertial motion.

\subsection{The Variational Principle}

If $D$ encodes a fundamental measure of intervals, then integrating $D$ along a curve accumulates the total ``interval'' traversed by that path. The paths of freely moving particles---those not subject to any external influence---should extremize this accumulated action.

This principle has deep roots in physics: from Fermat's principle in optics to the principle of least action in classical mechanics \cite{Misner1973}, the fundamental laws of nature describe motion through variational principles.

For a curve $\gamma: [a,b] \to \mathbb{R}^n$ parameterized by $t$, we define the action functional
\[
S[\gamma] = \int_a^b L(\dot{\gamma}(t)) \, dt,
\]
where $L(v) = D(v)$ is the \emph{Lagrangian}. Translation invariance (Axiom~\ref{axiom:translational}) ensures $L$ depends only on velocity $\dot{\gamma}(t)$, not position.

The Euler-Lagrange equations for such a Lagrangian are
\[
\frac{d}{dt} \left( \frac{\partial L}{\partial \dot{\gamma}} \right) = 0.
\]

To be precise about what ``extremize'' means: we say $S$ is \emph{stationary} at a curve $\gamma$ if its first-order change vanishes under every smooth perturbation that holds the endpoints fixed. That is, for every smooth $\eta:[a,b]\to\mathbb{R}^n$ with $\eta(a)=\eta(b)=0$,
\[
\left.\frac{d}{d\varepsilon}\right|_{\varepsilon=0} S[\gamma + \varepsilon\eta] = 0.
\]
A standard computation (differentiating under the integral sign and integrating by parts, using $\eta(a)=\eta(b)=0$) shows that $S$ is stationary at $\gamma$ if and only if $\gamma$ satisfies the Euler-Lagrange equations above. We take these stationary curves as our geodesics.

\subsection{Constructing the Lagrangian from $D$}

Given our interval function $D$, we set $L(v) = D(v)$. Since $D$ is homogeneous of degree $p$:
\[
L(\lambda v) = \lambda^p L(v).
\]

\begin{definition}[Geodesics]
\label{def:geodesic}
A \emph{geodesic} is a curve $\gamma(t)$ that satisfies the Euler-Lagrange equations for the Lagrangian $L(v) = D(v)$:
\[
\frac{d}{dt} \left( \frac{\partial L}{\partial \dot{\gamma}} \right) = 0.
\]
Since the gradient $\nabla D$ is defined only on $\mathbb{R}^n \setminus \{0\}$ (Axiom~\ref{axiom:smoothness}), this equation is meaningful only along curves with nowhere-vanishing velocity, $\dot\gamma(t) \neq 0$; these are the worldlines of genuinely moving particles, and they are the only curves we call geodesics.
\end{definition}

\begin{remark}
It is common in the literature to use $L(v) = D(v)^{1/p}$ instead, which is homogeneous of degree 1. On any cone where $D > 0$, the degree-one action $\int D(\dot\gamma)^{1/p}\,dt$ is invariant under orientation-preserving reparameterization, and the two Lagrangians $D$ and $D^{1/p}$ then determine the same unparametrized geodesics---a standard fact \cite[\S1.2--1.3]{BaoChernShen2000}---so nothing is lost by working with the more natural $L(v) = D(v)$. We do not invoke this in the parts of the argument concerning the null cone or the indefinite case, where $D$ may vanish or change sign and $D^{1/p}$ need not be smooth.
\end{remark}

\begin{axiom}[Determinism of Inertial Motion]
\label{axiom:uniqueness}
For any point $x_0 \in \mathbb{R}^n$ and any nonzero direction $v_0 \in \mathbb{R}^n \setminus \{0\}$, there is a unique inertial \emph{worldline} through $x_0$ in the direction $v_0$: every geodesic $\gamma$ with $\gamma(0) = x_0$ and $\dot\gamma(0)$ a positive multiple of $v_0$ traces out the same image.
\end{axiom}

\textbf{Physical motivation:} This is a principle of \emph{determinism}: the future of a free body is fixed by its present state. Given where an inertial particle is and which way it is heading, its entire worldline is determined---there is exactly one path it can trace, and it does trace it. This is among the most basic commitments of classical physics, and it is what makes inertial motion predictive rather than merely descriptive. We take it as a primitive axiom. We require $v_0 \neq 0$, as a worldline carries a genuine, nonzero direction of travel; this is also where the gradient $\nabla D(v_0)$ is defined.

Determinism forces $D \not\equiv 0$: were $D$ identically zero, the Lagrangian $L = D$ would be constant, every curve would be stationary, and no worldline through $x_0$ would be singled out---contradicting the axiom.

\begin{lemma}
\label{lem:straight-lines}
The worldline of every geodesic of the Lagrangian $L(v) = D(v)$ is a straight line.
\end{lemma}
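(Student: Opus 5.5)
We have to show that every geodesic traces a straight line, using the determinism axiom (Axiom~\ref{axiom:uniqueness}). The idea is to exhibit, for each initial point and direction, a straight-line geodesic. Determinism then says every geodesic with that start must have the same image, so it too is a line.

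\emph{Step 1: affine lines are geodesics.} Fix $x_0$ and $v_0 \neq 0$, and consider $\gamma(t) = x_0 + t v_0$. Its velocity is constant and nonzero, so $\partial L/\partial\dot\gamma = \nabla D(v_0)$. This gradient is defined because $D$ is smooth on $\mathbb{R}^n\setminus\{0\}$ (Axiom~\ref{axiom:smoothness}). It is constant in $t$, so
\[
\frac{d}{dt}\nabla D(\dot\gamma(t)) = 0 .
\]
Hence the Euler--Lagrange equation of Definition~\ref{def:geodesic} holds, and $\gamma$ is a geodesic with $\gamma(0)=x_0$ and $\dot\gamma(0)=v_0$. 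No property of $D$ beyond smoothness away from the origin is needed, because $L$ has no explicit position dependence (Axiom~\ref{axiom:translational}).

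\emph{Step 2: invoke determinism.} Let $\sigma$ be any geodesic and fix a parameter value $s_0$. Set $x_0=\sigma(s_0)$ and $v_0=\dot\sigma(s_0)$; then $v_0\neq 0$ because Definition~\ref{def:geodesic} requires nowhere-vanishing velocity. Reparametrize by the shift $t\mapsto s_0+t$, which preserves the Euler--Lagrange equation since $L$ is autonomous. The shifted curve is then a geodesic through $x_0$ with initial velocity $v_0$. By Axiom~\ref{axiom:uniqueness}, its image coincides with the image of $t\mapsto x_0+tv_0$ from Step 1, so the image of $\sigma$ lies on a line. Strictly, the axiom compares images of geodesics starting at $t=0$; the shift handles that bookkeeping.

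\emph{Step 3: the domain and direction issue.} This is the only delicate point. The axiom speaks of "the same image", but the straight geodesic could be taken on $[0,\infty)$, on a bounded interval, or on all of $\mathbb{R}$, while $\sigma$ may be defined on some interval $[a,b]$ on both sides of $s_0$. I would apply the axiom twice:
\begin{itemize}
\item to the forward part of $\sigma$, with direction $v_0$ at $x_0$;
\item to the time-reversed backward part $t\mapsto\sigma(s_0-t)$, which is again a geodesic because $\nabla D$ is odd (from Axiom~\ref{axiom:isotropy}), with direction $-v_0$.
\end{itemize}
Each part then lies on the line $x_0+\mathbb{R}v_0$, interpreting "the same image" with the line taken on a matching domain. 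Consequently the whole image of $\sigma$ lies on a single straight line.

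Alternatively, one can simply pick $s_0$ to be the left endpoint $a$. Then the forward argument alone covers all of $\sigma$, and no time reversal is needed. I expect to present it this way, with a brief remark that the interpretation of "image" in Axiom~\ref{axiom:uniqueness} is the intended one.
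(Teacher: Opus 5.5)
Your proposal is correct and follows the paper's proof: each line $x_0+tv_0$ has constant momentum $\nabla D(v_0)$, so it is a geodesic, and Axiom~\ref{axiom:uniqueness} then forces every geodesic with the same initial point and direction onto that line. Your extra bookkeeping tightens a point the paper passes over silently. That bookkeeping consists of the parameter shift and of reading ``same image'' as ``lies on the line'' with matching domains, handled either via the left endpoint or via time reversal using the oddness of $\nabla D$.
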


\begin{proof}
Since $L = D$ depends only on velocity, the Euler--Lagrange equation reads
\[
\frac{d}{dt}\,\nabla D(\dot\gamma(t)) = 0,
\]
so a curve is a geodesic precisely when its \emph{momentum} $p(t) := \nabla D(\dot\gamma(t))$ is constant in time.\footnote{This is the simplest instance of Noether's theorem~\cite{Noether1918}: because $L=D$ depends only on velocity and not on position (translational homogeneity, Axiom~\ref{axiom:translational}), the associated momentum is conserved. More generally, every one-parameter family of isometries in $G$ yields a quantity conserved along geodesics.}

Fix $x_0 \in \mathbb{R}^n$ and a nonzero $v_0$. The line $\sigma(t) := x_0 + v_0 t$ has $\dot\sigma \equiv v_0 \neq 0$, so its momentum $\nabla D(v_0)$ is constant; hence $\sigma$ is a geodesic through $x_0$ in the direction $v_0$. By Axiom~\ref{axiom:uniqueness} every geodesic through $x_0$ in the direction $v_0$ traces the same image, namely this line. Therefore the worldline of every geodesic is a straight line.
\end{proof}

\subsection{The Isometry Group}

Having established that geodesics are straight lines, we now impose the final axiom encoding the principle of relativity: isotropy, realized by a rich symmetry group acting on the universe.

\begin{axiom}[Isotropy]
\label{axiom:isometry}
There exists a Lie group $G$ acting smoothly on $\mathbb{R}^n$ by diffeomorphisms, satisfying:
\begin{enumerate}
\item \textbf{Isometry.} Every $T \in G$ preserves intervals: $D(Tu - Tv)=D(u-v)$ for all $u, v \in \mathbb{R}^n$.

\item \textbf{Homogeneity of level sets.} For any nonzero $u, v \in \mathbb{R}^n$ with $D(u)=D(v)$, there exists an origin-fixing $T \in G$ such that $T(u) = v$.

\item \textbf{Reversibility.} Fix any $e_0$ with $D(e_0)\neq 0$, and let
\[
H := \{h \in G : h(0) = 0,\ h e_0 = e_0\}, \qquad
W := \{w \in \mathbb{R}^n : \nabla D(e_0) \cdot w = 0\}
\]
be the isometries fixing the origin and $e_0$, and the directions transverse to $e_0$ (tangent to the level set $\{D = D(e_0)\}$). For each nonzero $u \in W$, there exists $h \in H$ whose differential at $e_0$ satisfies $dh_{e_0}(u) = -u$.
\end{enumerate}
\end{axiom}

\textbf{Physical motivation.} Having established that geodesics represent inertial motion, we invoke the principle of relativity: the laws of physics are the same in all inertial frames, and the isometry group is the intrinsic freedom to change between them. Condition (2) is full rotational symmetry: directions of equal interval are physically equivalent, making $G$ transitive on each level set. Condition (3) is a \emph{reflection symmetry}: at a reference $e_0$, every transverse direction can be reversed by an isometry, so it is indistinguishable from its opposite. Together they make $G$ richly homogeneous.

\section{Structure of the Isometry Group}

We now use the isometry group to determine the precise form of $D$.

\subsection{Isometries are Affine Transformations}

\begin{lemma}[Fundamental Theorem of Affine Geometry]
\label{lem:lines-to-affine}
Let $T: \mathbb{R}^n \to \mathbb{R}^n$ be a continuous bijection such that both $T$ and $T^{-1}$ map straight lines into straight lines---equivalently, $T$ carries each line \emph{onto} a line. Then $T$ is an affine transformation: $T(x) = Ax + c$ for some invertible linear map $A$ and vector $c$.
\end{lemma}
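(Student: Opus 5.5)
We follow the classical route to the fundamental theorem of affine geometry, specialized to $\mathbb{R}^n$ with continuity available. First, replace $T$ by $T - T(0)$, so we may assume $T(0)=0$ and must show $T$ is linear. Because $T$ is a bijection carrying each line onto a line, it preserves parallelism. Two distinct lines are parallel exactly when they are coplanar and disjoint. Disjointness is preserved by injectivity. Coplanarity is also preserved: a plane is the union of the lines meeting two intersecting lines, so $T$ maps planes onto planes, using that $T^{-1}$ also sends lines to lines.

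\emph{Additivity.} Take $x,y$ linearly independent. The points $0,x,x+y,y$ form a parallelogram, characterized as the fourth vertex cut out by the line through $x$ parallel to $\overline{0y}$ and the line through $y$ parallel to $\overline{0x}$. Preservation of lines, intersections and parallelism then gives $T(x+y)=T(x)+T(y)$. For dependent $x,y$, pick an auxiliary $z$ independent of both (possible since $n\ge 2$) and write $x+y = (x+z) + (y - z)$, applying the independent case twice. Thus $T$ is additive.

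\emph{Homogeneity.} Additivity gives $T(qx)=qT(x)$ for all rationals $q$, by the same argument as the additive Cauchy equation in the proof of Lemma~\ref{lem:power-law}. Continuity of $T$ and density of $\mathbb{Q}$ then extend this to all real scalars. Hence $T$ is linear, $T(x)=Ax$, and $A$ is invertible because $T$ is a bijection. Restoring the translation gives $T(x)=Ax+c$.

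\emph{Main obstacle.} The delicate step is preservation of parallelism, specifically showing that coplanarity is preserved. We handle it by describing the plane spanned by two intersecting lines $\ell_1,\ell_2$ as the union of all lines joining a point of $\ell_1$ to a point of $\ell_2$, plus the lines parallel to these. Since $T$ maps lines onto lines, $T$ of this set lies in the plane spanned by $T\ell_1$ and $T\ell_2$. Applying the same argument to $T^{-1}$ gives equality.

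Alternatively, one can avoid planes entirely. Suppose two distinct parallel lines had non-parallel images. Those images are then either skew or intersecting. They cannot intersect, by injectivity. If they were skew, applying $T^{-1}$ to the transversal segments joining them would produce a contradiction with the original lines being coplanar.

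Continuity is used only in the homogeneity step. It is what lets us bypass the nontrivial field automorphisms of $\mathbb{R}$ that the purely algebraic theorem must exclude by other means, since $\mathbb{R}$ has no nontrivial automorphisms in any case.
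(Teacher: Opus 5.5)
Your outline is sound: translate so $T(0)=0$, show that planes and hence parallelism are preserved, get additivity on independent pairs from the parallelogram $0,x,y,x+y$, then use Cauchy plus continuity. The parallelism and parallelogram steps are correct, modulo a wording fix noted below. The dependent case, however, has a genuine gap as written.

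If $y=-x$, then $y-z=-(x+z)$, so $(x+z,\,y-z)$ is a dependent pair and the independent case does not apply. When $x+y\neq 0$ the decomposition is legitimate, but unwinding it gives
\[
T(x+y)=T(x+z)+T(y-z)=T(x)+T(z)+T(y)+T(-z),
\]
so you are silently using $T(-z)=-T(z)$. Nothing established so far gives this. You also cannot take it from your homogeneity step: there, $T(qx)=qT(x)$ for negative $q$ comes from $T(x)+T(-x)=T(0)=0$. That is additivity on the dependent pair $(x,-x)$, which is exactly what is in question, so the argument is circular.

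The repair is one line. For $z\neq 0$ choose $w\notin\mathbb{R}z$; the pairs $(z,w)$ and $(z+w,-z)$ are independent, so
\[
T(w)=T\big((z+w)+(-z)\big)=T(z+w)+T(-z)=T(z)+T(w)+T(-z),
\]
giving $T(-z)=-T(z)$. This handles $y=-x$ directly and completes your computation when $x+y\neq 0$ (for nonzero $x,y$; a zero vector is trivial).

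Two smaller points. First, the sentence ``a plane is the union of the lines meeting two intersecting lines'' is false for $n\ge 3$: every line through the common point meets both, so that union is all of $\mathbb{R}^n$. Your description in the ``main obstacle'' paragraph is the right one, provided a joining line passes through a point of $\ell_1$ and a \emph{different} point of $\ell_2$. Such lines already fill the plane, so drop ``plus the lines parallel to these''; if that clause were actually needed, it would presuppose the parallelism you are proving. Second, the skew-line alternative is only gestured at and can be discarded.

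For comparison, the paper (writing $A=T-T(0)$) orders the argument the other way. It tracks $A(u+tv)$ along two parallel families to get $A(u+tv)=A(u)+\varphi_v(t)A(v)$. It then shows each scalar function $\varphi_v$ is additive by shifting the base point, and uses continuity to get $\varphi_v(t)=t$. Homogeneity along every line therefore comes first, and additivity for dependent pairs is immediate, $A(u+cu)=(1+c)A(u)$, so the issue above never arises. Your ordering confines continuity to a single vector-valued Cauchy step and avoids the scalar functions $\varphi_v$. The price is the auxiliary-vector bookkeeping for dependent pairs, which is exactly where your slip occurred.
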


\noindent The proof is given in Appendix~\ref{app:affine-geometry}.

\begin{lemma}[Isometries preserve geodesics]
\label{lem:isometries-preserve-geodesics}
Every $T \in G$ maps geodesics to geodesics.
\end{lemma}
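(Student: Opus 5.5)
Since $T \in G$ is an isometry, it preserves the action of every curve. Because the geodesics are the stationary curves of that action, $T$ should carry them to stationary curves. The one difficulty is that $T$ is only a diffeomorphism, not yet known to be linear. So I cannot simply substitute $\dot{(T\circ\gamma)} = dT\,\dot\gamma$ into $L$ and compare. I would work with the metric-type identity $D(Tu-Tv)=D(u-v)$ directly.

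\emph{Step 1: infinitesimal invariance.} Fix $x$ and a nonzero $v$. For $\varepsilon>0$, the isometry property and Lemma~\ref{lem:power-law} give
\[
D\!\left(\frac{T(x+\varepsilon v)-T(x)}{\varepsilon}\right)=\varepsilon^{-p}D\big(T(x+\varepsilon v)-T(x)\big)=\varepsilon^{-p}D(\varepsilon v)=D(v).
\]
As $\varepsilon\to0^+$, the argument on the left tends to $dT_x v$. By continuity of $D$ (Axiom~\ref{axiom:smoothness}) this yields $D(dT_x v)=D(v)$ for all $x$ and all $v$, including $v=0$ trivially. Thus $L(\dot{(T\circ\gamma)}(t))=L(dT_{\gamma(t)}\dot\gamma(t))=L(\dot\gamma(t))$ pointwise.

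\emph{Step 2: action invariance and stationarity.} From Step 1, $S[T\circ\gamma]=S[\gamma]$ for every $C^1$ curve $\gamma$. Now let $\gamma$ be a geodesic, so $\dot\gamma$ never vanishes. Then $T\circ\gamma$ also has nowhere-vanishing velocity, because $dT_x$ is invertible for a diffeomorphism. Take any endpoint-fixing variation $\tilde\eta$ of $T\circ\gamma$. Pull it back through the diffeomorphism $T^{-1}$: the family $\gamma_\varepsilon:=T^{-1}\circ(T\circ\gamma+\varepsilon\tilde\eta)$ is a smooth family with fixed endpoints and $\gamma_0=\gamma$. Its $\varepsilon$-derivative at $0$ is $\eta=dT^{-1}\tilde\eta$, which vanishes at the endpoints. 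We have $S[T\circ\gamma+\varepsilon\tilde\eta]=S[\gamma_\varepsilon]$. Differentiating at $\varepsilon=0$ and using the standard first-variation formula (valid for general smooth families, not only linear ones $\gamma+\varepsilon\eta$) gives a boundary term, which vanishes, plus the Euler--Lagrange expression for $\gamma$ paired with $\eta$, which vanishes. Hence $T\circ\gamma$ is stationary, so it is a geodesic.

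\emph{Main obstacle and fallback.} The delicate point is that Definition~\ref{def:geodesic} phrases stationarity through linear perturbations $\gamma+\varepsilon\eta$, while the pulled-back family is nonlinear. I would handle this by noting that the first variation depends only on the variation field $\eta=\partial_\varepsilon\gamma_\varepsilon|_0$. Differentiating under the integral gives $\int \nabla D(\dot\gamma)\cdot\dot\eta\,dt$, which equals the linear-perturbation derivative with that same $\eta$. Here $\dot\gamma\neq0$ keeps $\nabla D$ smooth along nearby curves.

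An alternative route avoids this altogether. Lemma~\ref{lem:straight-lines} makes geodesics straight lines with constant momentum. One could instead show directly that $T\circ\gamma$ satisfies $\frac{d}{dt}\nabla D(\dot{(T\circ\gamma)})=0$. That looks harder without knowing that $T$ is affine, so I would use the variational argument.
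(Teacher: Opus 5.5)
Your proof is correct and follows the paper's route. In both, difference quotients together with $p$-homogeneity and continuity of $D$ show that $D$ of the velocity is unchanged pointwise, so $S[T\circ\gamma]=S[\gamma]$, and stationarity then transfers by pulling endpoint-fixed variations back through $T^{-1}$. Your check that the first variation of the nonlinear family $T^{-1}\circ(T\circ\gamma+\varepsilon\tilde\eta)$ equals the linear-perturbation derivative with field $\eta=dT^{-1}\tilde\eta$ makes rigorous a step that the paper's ``nearby curve'' argument only sketches.
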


\begin{proof}
Let $\gamma:[a,b]\to\mathbb{R}^n$ be any smooth curve and $T \in G$.

\medskip
\noindent\textbf{Step 1: $T$ preserves the interval of the velocity at every instant.}

We claim that
\begin{equation}
D\big(\dot{(T\gamma)}(t)\big) = D\big(\dot\gamma(t)\big) \qquad \text{for every } t,
\label{eq:speed-preserved}
\end{equation}
where $T\gamma := T \circ \gamma$. The proof uses only the definition of velocity as a difference quotient together with homogeneity.

For \emph{any} smooth curve $\sigma$, the velocity at time $t$ is the limit
\[
\dot\sigma(t) = \lim_{h \to 0^+} \frac{\sigma(t+h) - \sigma(t)}{h}.
\]
Since $D$ is continuous and homogeneous of degree $p$ (so $D(w/h) = D(w)/h^p$ for $h > 0$, by Lemma~\ref{lem:power-law}),
\[
D\big(\dot\sigma(t)\big)
= D\!\left(\lim_{h \to 0^+} \frac{\sigma(t+h) - \sigma(t)}{h}\right)
= \lim_{h \to 0^+} \frac{D\big(\sigma(t+h) - \sigma(t)\big)}{h^{p}}.
\]
Apply this identity twice: once to $\sigma = T\gamma$ and once to $\sigma = \gamma$. For each fixed $h$, the two numerators are equal, because the isometry condition (Axiom~\ref{axiom:isometry}(1)) applied to the points $u = \gamma(t+h)$ and $v = \gamma(t)$ gives
\[
D\big(T\gamma(t+h) - T\gamma(t)\big) = D\big(\gamma(t+h) - \gamma(t)\big).
\]
Letting $h \to 0^+$ on both sides yields \eqref{eq:speed-preserved}.

\medskip
\noindent\textbf{Step 2: $T$ preserves the action, hence geodesics.}

Recall the action functional $S[\gamma] = \int_a^b D(\dot\gamma(t))\,dt$---the total interval accumulated along the path---whose stationary curves are, by Definition~\ref{def:geodesic}, the geodesics. By \eqref{eq:speed-preserved} the integrand is unchanged at every instant, so
\[
S[T\gamma] = \int_a^b D\big(\dot{(T\gamma)}(t)\big)\,dt = \int_a^b D\big(\dot\gamma(t)\big)\,dt = S[\gamma]
\]
for \emph{every} smooth curve $\gamma$. Finally, $T$ maps stationary curves to stationary curves: $T$ is an invertible map carrying endpoints to endpoints, so if $\gamma$ is stationary and $\rho$ is any nearby curve with the same endpoints as $T\gamma$, then $T^{-1}\rho$ is a nearby curve with the same endpoints as $\gamma$ and $S[\rho] = S[T^{-1}\rho]$; since $S$ has no first-order change at $\gamma$, it has none at $T\gamma$ either. Hence $T$ maps geodesics to geodesics.
\end{proof}

\begin{theorem}
\label{thm:isometries-affine}
Every transformation $T \in G$ from Axiom~\ref{axiom:isometry} is an affine transformation.
\end{theorem}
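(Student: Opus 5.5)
The plan is to verify the hypotheses of Lemma~\ref{lem:lines-to-affine} for an arbitrary $T \in G$. Since $G$ is a Lie group acting smoothly by diffeomorphisms, $T$ is a continuous bijection of $\mathbb{R}^n$, and its inverse $T^{-1}\in G$ is again an isometry. So it suffices to show that every $T\in G$ carries each straight line \emph{into} a straight line. Applying the same conclusion to $T^{-1}$ then shows that $T$ maps lines \emph{onto} lines, and Lemma~\ref{lem:lines-to-affine} gives $T(x)=Ax+c$.

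To show that lines go to lines, fix a line $\ell=\{x_0+tv_0\}$ with $v_0\neq 0$ and consider $\sigma(t)=x_0+tv_0$. This is a geodesic: its velocity is constant and nonzero, so its momentum $\nabla D(v_0)$ is constant, exactly as in the proof of Lemma~\ref{lem:straight-lines}. By Lemma~\ref{lem:isometries-preserve-geodesics}, $T\circ\sigma$ is again a geodesic, provided it qualifies under Definition~\ref{def:geodesic}. That definition requires nowhere-vanishing velocity, and this holds because $dT$ is invertible ($T$ is a diffeomorphism), so $\tfrac{d}{dt}T\sigma(t)=dT_{\sigma(t)}v_0\neq 0$. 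Lemma~\ref{lem:straight-lines} then says the image $T(\ell)$ is contained in a straight line. The same argument applied to each segment $\sigma|_{[a,b]}$ shows every segment maps into a line; since these segments exhaust $\ell$ and overlapping segments determine the same line, all of $T(\ell)$ lies in a single line.

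The main obstacle is the bookkeeping in Lemma~\ref{lem:isometries-preserve-geodesics}, which is phrased for curves on $[a,b]$ with fixed endpoints. I would make sure it applies on every compact subinterval, so that $T\sigma$ satisfies the Euler--Lagrange equation on all of $\mathbb{R}$. I would also check that stationarity in the variational sense coincides with the Euler--Lagrange equation. This is the ``standard computation'' cited earlier, and it is valid here because $D$ is smooth near the nonvanishing velocities involved. A secondary check is that, although Lemma~\ref{lem:straight-lines} speaks of worldlines, its proof really shows that the image of any geodesic through $x_0$ with initial direction $v_0$ is the line $x_0+\mathbb{R}_{\ge 0}v_0$ and its backward extension. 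I would apply it at the point $T\sigma(0)$ with direction $\tfrac{d}{dt}T\sigma(0)$, and handle negative times by reversing the parameter. Reversal preserves the geodesic property because the momentum $\nabla D(-w)=-\nabla D(w)$ is still constant, by evenness (Axiom~\ref{axiom:isotropy}). Once these checks are done, Lemma~\ref{lem:lines-to-affine} completes the proof.
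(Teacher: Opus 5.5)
Your proposal is correct and follows the paper's proof: straight lines are geodesics, isometries carry geodesics to geodesics (Lemma~\ref{lem:isometries-preserve-geodesics}), and geodesic worldlines are straight (Lemma~\ref{lem:straight-lines}); applying the same argument to $T^{-1}$ then upgrades ``into'' to ``onto'' so that Lemma~\ref{lem:lines-to-affine} applies. Your extra checks are bookkeeping the paper leaves implicit rather than a different route: that $T\circ\sigma$ has nowhere-vanishing velocity because $dT$ is invertible, and the treatment of parameter domains and time reversal via evenness.
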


\begin{proof}
By Lemma~\ref{lem:isometries-preserve-geodesics}, $T$ maps geodesics to geodesics. By Lemma~\ref{lem:straight-lines}, the worldline of every geodesic is a straight line. Therefore $T$ maps straight lines to straight lines. Since $T$ belongs to the group $G$ of smooth transformations (Axiom~\ref{axiom:isometry}), it has a smooth inverse $T^{-1} \in G$, which by the same argument also maps straight lines to straight lines; hence $T$ is a continuous bijection carrying lines \emph{onto} lines. Lemma~\ref{lem:lines-to-affine} then gives $T(x) = Ax + c$.
\end{proof}

\subsection{Linear Parts Preserve $D$}

\begin{lemma}
\label{lem:linear-part-preserves-D}
If $T(x) = Ax + c$ is an element of $G$, then $D(Av) = D(v)$ for all $v \in \mathbb{R}^n$.
\end{lemma}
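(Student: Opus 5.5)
This follows directly from the isometry condition of Axiom~\ref{axiom:isometry}(1), combined with the affine form of $T$ given by Theorem~\ref{thm:isometries-affine}. The point is that the translation part $c$ cancels whenever we take a difference of images, so the isometry condition reduces to a statement about $A$ alone.

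The first step is to fix an arbitrary $v \in \mathbb{R}^n$ and apply Axiom~\ref{axiom:isometry}(1) to the pair of events $u = v$ and $0$. This gives
\[
D(Tv - T0) = D(v - 0) = D(v).
\]
The second step is to evaluate the left side using $T(x) = Ax + c$. We have $Tv - T0 = (Av + c) - (A0 + c) = Av$, and substituting yields $D(Av) = D(v)$.

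The same computation with a general pair $u,v$ shows more: $Tu - Tv = A(u-v)$, and as $u - v$ ranges over all of $\mathbb{R}^n$ this recovers the identity $D(Aw) = D(w)$ for every $w$. The case $v = 0$ needs no separate treatment, since $A0 = 0$ and $D(0) = 0$ by Lemma~\ref{lem:absolute-homogeneity}.

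I do not expect a genuine obstacle here. The only point needing care is that the statement is about the linear part $A$ rather than about $T$ itself, so one must explicitly check that the translation vector $c$ cancels in the difference. Note also that the argument does not need invertibility of $A$ or any smoothness of $D$ at the origin, because the isometry axiom is stated for all pairs of points, including pairs involving the origin.
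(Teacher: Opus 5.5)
Your proof is correct and is essentially the paper's argument: the paper applies Axiom~\ref{axiom:isometry}(1) to a general pair $u,v$, observes that $Tu - Tv = A(u-v)$ because the translation $c$ cancels, and then sets $w = u - v$. Your specialization to the pair $(v, 0)$ is the same computation.
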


\begin{proof}
By Axiom~\ref{axiom:isometry}(1):
\[
D(Tu - Tv) = D(A(u-v)) = D(u - v)
\]
for all $u, v$. Setting $w = u - v$ gives $D(Aw) = D(w)$ for all $w$.
\end{proof}

We denote the origin-fixing linear isometry group by
\[
G_0 = \{A \in \mathrm{GL}(\mathbb{R}^n) : D(Av) = D(v) \text{ for all } v\}.
\]

\section{The Interval Function is Determined by a Quadratic Form}

We now prove the central result: under our axioms, $D$ must be a power of a nondegenerate quadratic form.

\subsection{Sign regions and cones}

Recall that $D$ is smooth away from the origin, even, and homogeneous of degree $p>0$:
\[
D(\lambda v) = |\lambda|^p D(v) \quad \text{for all } \lambda \neq 0,\, v\neq 0.
\]
Define the sign regions
\[
\Omega_+ := \{v \in \mathbb{R}^n \setminus \{0\} : D(v) > 0\}, \qquad
\Omega_- := \{v \in \mathbb{R}^n \setminus \{0\} : D(v) < 0\}.
\]

\begin{lemma}[Cones from homogeneity]
\label{lem:cones}
Each connected component of $\Omega_+$ and $\Omega_-$ is a cone: if $v$ lies in a given component and $\lambda>0$, then $\lambda v$ lies in the same component.
\end{lemma}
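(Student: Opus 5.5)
The plan is to show that the ray $\{\lambda v : \lambda>0\}$ through a point $v$ of a component stays inside the same sign region, and then use connectedness of the ray to keep it inside the same component.

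\emph{Step 1: the ray stays in the same sign region.} Let $v\in\Omega_+$, so $v\neq 0$ and $D(v)>0$. By Lemma~\ref{lem:power-law}, for every $\lambda>0$ we have $D(\lambda v)=\lambda^p D(v)$. Since $\lambda^p>0$, this gives $D(\lambda v)>0$. Also $\lambda v\neq 0$, so $\lambda v\in\Omega_+$. The same computation, with the inequality reversed, shows that $\Omega_-$ is closed under multiplication by positive scalars.

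\emph{Step 2: the ray lies in the same component.} Consider the path $\phi:(0,\infty)\to\mathbb{R}^n\setminus\{0\}$ given by $\phi(\lambda)=\lambda v$. It is continuous, and by Step 1 its image lies entirely in $\Omega_+$ (respectively $\Omega_-$). The interval $(0,\infty)$ is connected, so the image $\phi((0,\infty))$ is a connected subset of $\Omega_+$. This image contains $v=\phi(1)$. A connected subset meeting a component must lie inside that component, so $\lambda v$ belongs to the same component as $v$ for every $\lambda>0$.

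I expect no real obstacle here. The only points to check are that $\lambda v\neq 0$, so that we stay in the punctured space where $\Omega_\pm$ are defined, and that components are taken within $\Omega_\pm$ with the subspace topology. Openness of $\Omega_\pm$, which follows from continuity of $D$, is not even needed for this argument.
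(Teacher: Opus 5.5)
Your proof is correct and follows the paper's argument. In both, homogeneity $D(\lambda v)=\lambda^p D(v)$ keeps the ray in the same sign region, and connectedness of the path $\lambda\mapsto\lambda v$ keeps it in the same component; your checks that $\lambda v\neq 0$ and that a connected set meeting a component lies inside it simply make the paper's one-line step explicit.
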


\begin{proof}
Let $C$ be a connected component of $\Omega_+$. If $v\in C$, then $D(v)>0$. For any $\lambda>0$, $D(\lambda v) = \lambda^p D(v) > 0$, so $\lambda v \in \Omega_+$. The map $\lambda \mapsto \lambda v$ is continuous and connects $v$ to any point on the ray, so all such points lie in the same connected component. The argument for $\Omega_-$ is identical.
\end{proof}

Since $D \not\equiv 0$, at least one of $\Omega_+, \Omega_-$ is nonempty. We develop the theory under the hypothesis $\Omega_+ \neq \emptyset$, which loses no generality: the isotropy axiom is anchored at any reference direction of \emph{nonzero} interval, of either sign, so when $\Omega_+ \neq \emptyset$ we may take the reference $e_0 \in \Omega_+$ (normalized to $D(e_0) = 1$) and build the geometry there. The complementary case $\Omega_+ = \emptyset$---where $D \leq 0$ everywhere---is treated separately, by passing to $-D$, in Lemma~\ref{lem:negative-case}.

We fix a connected component $C$ of $\Omega_+$ and work within this cone.

\begin{lemma}[One point per ray]
\label{lem:one-point-per-ray}
Let $C$ be a connected component of $\Omega_+$. The level set
$S := \{v \in C : D(v) = 1\}$
intersects each ray through the origin in exactly one point.
\end{lemma}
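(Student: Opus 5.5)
The plan is to read ``each ray through the origin'' as each open ray $\{\lambda v : \lambda > 0\}$ with $v \in C$; by Lemma~\ref{lem:cones} such a ray lies entirely in $C$. Rays from the origin that avoid $C$ cannot meet $S \subseteq C$ at all, so there is nothing to prove for them. The whole argument then reduces to the one-variable function $\lambda \mapsto D(\lambda v)$ along a fixed ray, using the homogeneity of Lemma~\ref{lem:power-law}.

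\textbf{Existence.} I will fix $v \in C$, so $D(v) > 0$. Lemma~\ref{lem:power-law} gives $D(\lambda v) = \lambda^p D(v)$ for all $\lambda > 0$. I then choose $\lambda_* := D(v)^{-1/p}$, which is a well-defined positive number because $p > 0$ and $D(v) > 0$. This gives $D(\lambda_* v) = 1$. By Lemma~\ref{lem:cones}, $\lambda_* v$ lies in the same component $C$, so $\lambda_* v \in S$.

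\textbf{Uniqueness.} Suppose $\lambda v$ and $\mu v$ (with $\lambda, \mu > 0$) both lie in $S$. Then $\lambda^p D(v) = \mu^p D(v) = 1$. Since $D(v) \neq 0$, this gives $\lambda^p = \mu^p$. The map $t \mapsto t^p$ is strictly increasing on $(0,\infty)$ for $p > 0$, so $\lambda = \mu$.

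I do not expect a real obstacle. The only point needing care is interpretive: the statement concerns the rays that actually meet $C$. Any point of $S$ spans a ray contained in $C$, and every ray through a point of $C$ stays in $C$ by Lemma~\ref{lem:cones}. So the rays meeting $S$ are exactly the rays through points of $C$, and the map $v \mapsto D(v)^{-1/p} v$ is the resulting radial projection of $C$ onto $S$.
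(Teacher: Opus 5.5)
Your proof is correct and follows the paper's route: fix $v \in C$ and use $D(\lambda v) = \lambda^p D(v)$ to obtain the unique positive solution $\lambda = D(v)^{-1/p}$. The paper's proof is a one-liner, and your additions fill in what it leaves implicit: citing Lemma~\ref{lem:cones} to place $\lambda_* v$ in $C$, and making explicit that the statement concerns open rays through points of $C$.
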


\begin{proof}
For $v \in C$, the equation $D(\lambda v) = \lambda^p D(v) = 1$ has unique solution $\lambda = D(v)^{-1/p}$.
\end{proof}

\subsection{Normalized interval function}

To simplify the analysis, we reduce to the degree-2 case.

\begin{definition}[Normalized interval function]
\label{def:Q-normalized}
On each connected component of $\{D > 0\}$, define
\[
Q(v) := D(v)^{2/p}.
\]
\end{definition}

\begin{lemma}[Properties of $Q$]
\label{lem:Q-properties}
The function $Q$ satisfies:
\begin{enumerate}
\item $Q$ is smooth on $C$ (and on each connected component of $\{D > 0\}$).
\item $Q$ is homogeneous of degree 2: $Q(\lambda v) = \lambda^2 Q(v)$ for all $\lambda > 0$.
\item $G_0$ preserves $Q$: $Q(Av) = Q(v)$ for all $A \in G_0$.
\item The level sets of $D$ and $Q$ coincide: $\{D = 1\} = \{Q = 1\}$ on $C$.
\end{enumerate}
\end{lemma}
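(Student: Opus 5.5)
The four properties are essentially immediate consequences of the positivity of $D$ on $C$, together with Lemma~\ref{lem:power-law} and Lemma~\ref{lem:linear-part-preserves-D}. I would verify them in order.

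For (1), I will write $Q = \phi\circ D$ with $\phi(s) = s^{2/p}$. On $C$ we have $D>0$, so $D$ takes values in $(0,\infty)$, where $\phi$ is $C^\infty$. Since $C\subset \mathbb{R}^n\setminus\{0\}$ and $D$ is smooth there by Axiom~\ref{axiom:smoothness}, the composition is smooth on $C$. The same argument applies verbatim on every connected component of $\{D>0\}$.

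For (2), take $v\in C$ and $\lambda>0$. Lemma~\ref{lem:cones} gives $\lambda v \in C$, so $Q(\lambda v)$ is defined. By Lemma~\ref{lem:power-law},
\[
Q(\lambda v) = \big(\lambda^p D(v)\big)^{2/p} = \lambda^2 D(v)^{2/p} = \lambda^2 Q(v).
\]
Splitting the power is legitimate because both factors are positive.

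For (3), let $A\in G_0$. Then $D(Av)=D(v)$, so $A$ maps $\{D>0\}$ into itself, and also onto it, since $A^{-1}\in G_0$. Hence $Q(Av) = D(Av)^{2/p} = D(v)^{2/p} = Q(v)$ whenever $Q(v)$ is defined. The one point to be careful about is that $Q$ is defined on all of $\{D>0\}$ by Definition~\ref{def:Q-normalized}, not just on $C$. Because $A$ is a linear homeomorphism of $\mathbb{R}^n\setminus\{0\}$ preserving $\{D>0\}$, it permutes the connected components, so the identity holds regardless of whether $A$ fixes $C$. I regard this bookkeeping as the only mild obstacle: the statement compares values of $Q$ at $v$ and $Av$, which may lie in different components, and the definition on every component handles exactly that.

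For (4), on $C$ the map $s\mapsto s^{2/p}$ is a bijection of $(0,\infty)$ that fixes $1$. Therefore $D(v)=1$ if and only if $Q(v)=1$.
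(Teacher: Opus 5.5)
Your proposal is correct and follows essentially the paper's proof. Properties (1), (3) and (4) come directly from $D>0$ on $C$ (so $D^{2/p}$ is well defined and smooth) together with $D\circ A = D$, and (2) is the same one-line computation from Lemma~\ref{lem:power-law}; your remark about $A$ permuting components in (3) is harmless but not needed, since $Q = D^{2/p}$ is a single formula on all of $\{D>0\}$.
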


\begin{proof}
Properties (1), (3), (4) follow from the corresponding properties of $D$, using that $D > 0$ on $C$ so $D^{2/p}$ is well-defined and smooth. For (2):
$Q(\lambda v) = D(\lambda v)^{2/p} = (\lambda^p D(v))^{2/p} = \lambda^2 D(v)^{2/p} = \lambda^2 Q(v)$.
\end{proof}

\begin{lemma}[Regular level set]
\label{lem:regular-level}
The level set $M := \{v \in C : Q(v) = 1\}$ is a smooth embedded hypersurface in $\mathbb{R}^n$.
\end{lemma}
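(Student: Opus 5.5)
The plan is to show that $1$ is a regular value of $Q$ restricted to the open set $C$, and then invoke the regular value theorem.

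First, $C$ is open in $\mathbb{R}^n$. It is a connected component of $\Omega_+ = D^{-1}((0,\infty)) \setminus \{0\}$, and $\Omega_+$ is open by continuity of $D$ (Axiom~\ref{axiom:smoothness}). Components of an open subset of $\mathbb{R}^n$ are open, since $\mathbb{R}^n$ is locally connected. By Lemma~\ref{lem:Q-properties}(1), $Q$ is smooth on $C$, so $M = (Q|_C)^{-1}(1)$ is the preimage of a point under a smooth map on an open set.

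The key step is that $\nabla Q(v) \neq 0$ at every $v \in M$. This follows from Euler's identity for homogeneous functions. By Lemma~\ref{lem:Q-properties}(2), $Q(\lambda v) = \lambda^2 Q(v)$ for $\lambda > 0$. Since $C$ is a cone (Lemma~\ref{lem:cones}), the ray $\lambda v$ stays in $C$ for $\lambda$ near $1$, so we may differentiate in $\lambda$ at $\lambda = 1$:
\[
\nabla Q(v) \cdot v = 2\,Q(v).
\]
On $M$ the right-hand side equals $2 \neq 0$, so $\nabla Q(v) \neq 0$. Equivalently, since $\nabla Q = \tfrac{2}{p} D^{2/p - 1}\nabla D$ and $D = 1$ on $M$, one has $\nabla D(v)\cdot v = p\,D(v) = p \neq 0$. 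This also shows that the hyperplane $W$ in Axiom~\ref{axiom:isometry}(3) is a genuine codimension-one subspace, which will be useful later.

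The regular value (implicit function) theorem then shows that $M$ is a smooth embedded submanifold of $C$ of dimension $n-1$, and hence of $\mathbb{R}^n$, because $C$ is open. Nonemptiness of $M$ follows from Lemma~\ref{lem:one-point-per-ray}: each ray in $C$ meets $M$ exactly once, and $C \neq \emptyset$.

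No step is a real obstacle. The only points needing care are the openness of $C$, so that the regular value theorem applies in the ambient space, and the fact that the differentiation in Euler's identity stays inside $C$, which the cone property guarantees.
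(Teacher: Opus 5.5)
Your proposal is correct and follows the paper's own proof: Euler's identity $\nabla Q(v)\cdot v = 2Q(v) = 2$ on $M$ forces $\nabla Q \neq 0$ there, and the implicit function theorem finishes. Your explicit checks that $C$ is open (as a component of the open set $\Omega_+$) and that the ray $\lambda v$ stays in $C$ near $\lambda = 1$ are details the paper leaves implicit.
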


\begin{proof}
Since $Q$ is 2-homogeneous, Euler's theorem for homogeneous functions gives
\[
\nabla Q(v) \cdot v = 2Q(v)
\]
for all $v \in C$. At any $v \in M$, we have $Q(v) = 1$, so $\nabla Q(v) \cdot v = 2 \neq 0$. In particular $\nabla Q(v) \neq 0$ for every $v \in M$. By the implicit function theorem, $M$ is a smooth hypersurface.
\end{proof}

\begin{lemma}[Transitivity on the level set]
\label{lem:preserve-cone}
$G_0$ preserves $Q$ throughout $\Omega_+$, and acts transitively on $M = \{Q = 1\} \cap C$.
\end{lemma}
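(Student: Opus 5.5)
The plan is to derive both assertions directly from the fact that elements of $G_0$ preserve $D$ exactly. The transitivity part comes from Axiom~\ref{axiom:isometry}(2) combined with Theorem~\ref{thm:isometries-affine} and Lemma~\ref{lem:linear-part-preserves-D}. The only delicate point is that $Q$ and $M$ are defined componentwise, so I must check that the relevant linear maps carry the chosen component $C$ to itself.

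\textbf{Step 1 (preservation of $Q$ on $\Omega_+$).} Let $A\in G_0$. Since $D(Av)=D(v)$ for all $v$ and $A$ is invertible, $A$ maps $\Omega_+$ bijectively onto $\Omega_+$. Then $Q(Av)=D(Av)^{2/p}=D(v)^{2/p}=Q(v)$ for every $v\in\Omega_+$. Because $Q$ is given by the single formula $D^{2/p}$ on every component of $\{D>0\}$, this holds regardless of which components $v$ and $Av$ lie in. As a linear homeomorphism of $\mathbb{R}^n\setminus\{0\}$ preserving $\Omega_+$, $A$ also permutes the connected components of $\Omega_+$: the image of a component is connected and contained in $\Omega_+$, and the same holds for $A^{-1}$. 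Consequently, if $A(C)\cap C\neq\emptyset$, then $A(C)=C$, and so $A(M)=M$ by Lemma~\ref{lem:Q-properties}(4).

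\textbf{Step 2 (transitivity).} Take $u,v\in M$. Then $D(u)=D(v)=1$ and both are nonzero, so Axiom~\ref{axiom:isometry}(2) provides an origin-fixing $T\in G$ with $T(u)=v$. By Theorem~\ref{thm:isometries-affine}, $T(x)=Ax+c$, and $T(0)=0$ forces $c=0$, so $T=A$ is linear. By Lemma~\ref{lem:linear-part-preserves-D}, $A\in G_0$. Since $Au=v$ with $u,v\in C$, Step~1 gives $A(C)=C$ and $A(M)=M$. Hence the subgroup $G_0^C:=\{A\in G_0: A(C)=C\}$ acts on $M$, and this action is transitive, which is the meaning of ``acts transitively on $M$'' here.

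\textbf{Main obstacle.} The only real subtlety is the component issue. An arbitrary element of $G_0$ may swap $C$ with another component of $\Omega_+$; the map $-I$ does this, for instance, when $C\neq -C$. So $G_0$ itself need not act on $M$. I would handle this as above, by phrasing the transitivity via the stabilizer $G_0^C$ and observing that every element produced by the axiom automatically lies in it. Everything else follows immediately from results already established.
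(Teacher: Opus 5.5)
Your proof is correct and follows the same route as the paper: Axiom~\ref{axiom:isometry}(2) supplies an origin-fixing isometry $A$ with $Au=v$, and since $A$ sends a point of $C$ into $C$ it stabilizes $C$. That is exactly the sense of transitivity the paper intends, as its parenthetical remark about $-I$ indicates. You also make explicit two points the paper only asserts. First, the origin-fixing $T$ is linear and lies in $G_0$, by Theorem~\ref{thm:isometries-affine} and Lemma~\ref{lem:linear-part-preserves-D}. Second, a $D$-preserving invertible linear map permutes the components of $\Omega_+$, so $A(C)\cap C\neq\emptyset$ forces $A(C)=C$.
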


\begin{proof}
Preservation of $Q$ is Lemma~\ref{lem:Q-properties}(3). For transitivity, let $u, v \in M$. Then $D(u) = D(v) = 1$, so Axiom~\ref{axiom:isometry}(2) provides an origin-fixing isometry $A \in G_0$ with $Au = v$. Since $u \in C$ and $Au = v \in C$, this $A$ carries the component $C$ to itself; hence $G_0$ is transitive on $M$. (Individual elements of $G_0$ need not fix $C$: in the indefinite case $-I \in G_0$ exchanges the two cones of $\Omega_+$.)
\end{proof}

\subsection{Proof that $Q$ is a quadratic form}

\begin{theorem}[Q is a quadratic form]
\label{thm:Q-quadratic}
Under Axioms~\ref{axiom:smoothness}--\ref{axiom:isometry}, the normalized interval function $Q$ has the form
\[
Q(x) = x^T S x
\]
for some fixed symmetric matrix $S \in \mathbb{R}^{n \times n}$, constant throughout $C$.
\end{theorem}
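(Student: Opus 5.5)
The approach is to use the linear isometry group $G_0$ together with the reversibility condition, Axiom~\ref{axiom:isometry}(3), to show that the Hessian of $Q$ is constant along $M$. Since $Q$ is $2$-homogeneous, a constant Hessian makes $Q$ a quadratic form on $C$.

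The first step fixes the reference point. Choose the reference $e_0 \in M$, which is possible because $\Omega_+\neq\emptyset$ and we may normalize $D(e_0)=1$. Set $S := \tfrac12 \nabla^2 Q(e_0)$, a symmetric matrix, and consider the model form $q(x) := x^T S x$. Euler's identities for a $2$-homogeneous function give $\nabla^2 Q(v)\,v = \nabla Q(v)$ and $v^T\nabla^2Q(v)v = 2Q(v)$. Hence $q(e_0)=1$, and $\nabla q(e_0)=\nabla Q(e_0)$.

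The second step shows that $S$ is independent of the base point. For any $A \in G_0$, differentiating $Q(Ax)=Q(x)$ twice gives
\[
A^T \nabla^2 Q(Ax)\, A = \nabla^2 Q(x).
\]
Let $H_0 \subset G_0$ be the linear parts of elements of $H$. Each $h\in H_0$ satisfies $h e_0=e_0$, so $h^T S h = S$. Transitivity (Lemma~\ref{lem:preserve-cone}) then lets us transport $S$ to every point of $M$. For $v = Ae_0$ we get $\nabla^2Q(v) = A^{-T}(2S)A^{-1}$, and this is well defined precisely because $H_0$ preserves $S$. This alone does not show that the Hessian is the same matrix at every point, so more is needed.

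The key step, and the main obstacle, is to show that $Q$ and $q$ agree on $M$. My plan is to prove that the level set $M$ coincides with $\{q=1\}\cap C$ near $e_0$, and then spread the equality by transitivity. Here reversibility is the tool. Each $h\in H_0$ fixes $e_0$, preserves $\nabla Q(e_0)$ (so $h^T\nabla Q(e_0)=\nabla Q(e_0)$), and therefore preserves $W$. By Axiom~\ref{axiom:isometry}(3), for each $u\in W$ some $h$ satisfies $hu=-u$. Write $M$ near $e_0$ as a graph $e_0+w+\phi(w)e_0$ over $W$. Invariance under such $h$ shows that $\phi$ is even along each line $\mathbb{R}u$ at second order and beyond. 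I expect the sharper conclusion to require the group structure: the subgroup $H_0$ acts on $W$ by matrices preserving the nondegenerate form $S|_W$, and reversing every direction forces this action to be irreducible enough that any $H_0$-invariant homogeneous polynomial on $W$ is a function of $S|_W$ alone. I would try first to prove that $S|_W$ is nondegenerate, using Euler's relation together with the fact that an $H_0$-invariant null vector of $S|_W$ would be reversed yet fixed. Then the Taylor jets of $\phi$ at $0$ are polynomials in $w^TSw$.

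To finish, I would compare the two hypersurfaces $M$ and $\{q=1\}$. Both are $G_0$-invariant: $M$ by construction, and $\{q=1\}$ because the Hessian transport above shows $G_0\subset O(S)$. Both pass through $e_0$ and are tangent there. Since $G_0$ is transitive on $M$, the orbit $G_0 e_0 = M$ lies in $\{q=1\}$. A $2$-homogeneous function is determined by its unit level set ray by ray (Lemma~\ref{lem:one-point-per-ray}), so $Q = q$ on $C$.

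Everything thus reduces to the inclusion $G_0\subset O(S)$, that is, to the Hessian of $Q$ at $Ae_0$ being the constant $2S$ rather than merely its conjugate $A^{-T}(2S)A^{-1}$. I regard this as the crux. My fallback is to show that the stabilizer together with reversibility makes the second fundamental form of $M$ parallel, and hence constant, so that $M$ is a quadric.
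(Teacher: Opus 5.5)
Your preliminary computations are correct ($A^T\nabla^2Q(Ax)A=\nabla^2Q(x)$ and $H_0\subset O(S)$), and you locate the crux accurately: transport along $G_0$ only yields $\nabla^2Q(Ae_0)=A^{-T}(2S)A^{-1}$. But the proposal never resolves this, so there is a genuine gap. Your closing step, that $\{q=1\}$ is $G_0$-invariant, presupposes $G_0\subset O(S)$, which is exactly what is unproved. The tools you suggest for proving it do not work. Reversibility only says that each $u\in W$ is reversed by \emph{some} element of $H_0$; it does not make every $H_0$-invariant polynomial on $W$ a function of $S|_W$. For instance, take $D(x)=|x|^2$ on $\mathbb{R}^{16}$, with $G$ generated by translations and $\mathrm{Spin}(9)\subset SO(16)$. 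This satisfies all the axioms (reversibility can be checked using transitivity of $\mathrm{Spin}(7)$ on $S^6\times S^7$). Yet $H=\mathrm{Spin}(7)$ acts on $W=\mathbb{R}^7\oplus\mathbb{R}^8$ reducibly, with $|a|^2$ and $|b|^2$ separately invariant. Your nondegeneracy argument also fails: a radical vector of $S|_W$ being sent to its negative is no contradiction. Nondegeneracy is not part of this theorem anyway; the paper derives it afterwards from determinism. The parallel-second-fundamental-form fallback is not yet an argument.

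The missing idea is to differentiate once more, so that the quantity to be propagated is $0$ rather than the matrix $2S$. Let $T:=D^3Q(e_0)$. It is $H$-invariant and $H$ preserves $W$, so $f(u):=T(u,u,u)$ satisfies $f(u)=f(hu)=f(-u)=-f(u)$ for an $h$ reversing $u$. By polarization, $T|_{W^3}=0$. Since $\nabla^2Q$ is $0$-homogeneous, $T(e_0,\cdot,\cdot)=0$, and $\mathbb{R}^n=\mathbb{R}e_0\oplus W$ then gives $T=0$. Vanishing, unlike equality with a fixed matrix, survives the equivariance $D^3Q(Ax)[Au,Av,Aw]=D^3Q(x)[u,v,w]$. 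So transitivity gives $D^3Q=0$ on $M$, and $(-1)$-homogeneity extends this to $C$. Connectedness of $C$ then gives $\nabla^2Q\equiv 2S$, and Euler's identity yields $Q(x)=x^TSx$. The inclusion $G_0\subset O(S)$ then follows as a consequence rather than serving as an input. Your graph picture in fact already contained the key step. Since $h$ is linear with $he_0=e_0$ and $hu=-u$, it sends $(1+\phi(u))e_0+u$ to $(1+\phi(u))e_0-u$, so $\phi$ is exactly even, not just ``at second order''. Expand $Q(e_0+w+\phi(w)e_0)=1$ to third order, using that $\nabla^2Q(e_0)e_0=\nabla Q(e_0)$ annihilates $W$. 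This gives the cubic Taylor term $\phi_3(w)=-\tfrac{1}{12}T(w,w,w)$, so evenness already kills $T|_{W^3}$. What you were missing is that this one-point vanishing of the third derivative is all that needs to be transported. You do not need a full identification of $M$ with $\{q=1\}$, nor a classification of the $H_0$-invariants.
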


\textbf{Strategy.} We work on the level set $M = \{Q = 1\} \subset C$, fix a basepoint $e_0 \in M$, and use the stabilizer $H = \{h \in G_0 : he_0 = e_0\}$. The key steps are: (1) decompose $\mathbb{R}^n$ into the reference direction and the tangent space $W$; (2) use the central symmetry $u\mapsto -u$ on $W$ supplied by isotropy (Axiom~\ref{axiom:isometry}(3)) to show that every $H$-invariant symmetric trilinear form on $W$ vanishes; (3) deduce that the third derivative of $Q$ vanishes; (4) propagate to all of $C$ and conclude.

\begin{proof}
Throughout, we work on a fixed connected component $C$ of $\Omega_+$ and the level set $M = \{Q = 1\} \subset C$.

\medskip
\noindent\textbf{Step 1: Tangent space decomposition.}

Since $Q$ is 2-homogeneous, Euler's theorem gives $\nabla Q(x) \cdot x = 2Q(x)$. At $e_0 \in M$:
\[
\nabla Q(e_0) \cdot e_0 = 2 \neq 0,
\]
so $\nabla Q(e_0) \neq 0$. The tangent space to $M$ at $e_0$ is
\[
W := T_{e_0} M = \{w \in \mathbb{R}^n : \nabla Q(e_0) \cdot w = 0\}.
\]
Since $\nabla Q(e_0) \cdot e_0 = 2 \neq 0$ (Euler's theorem), $e_0 \notin W$. As $W$ has codimension 1:
\begin{equation}
\mathbb{R}^n = \mathbb{R} e_0 \oplus W.
\tag{$\star$}
\end{equation}

\medskip
\noindent\textbf{Step 2: A central symmetry on $W$.}

The elements of $H \subseteq G_0$ are origin-fixing isometries, hence linear (Theorem~\ref{thm:isometries-affine}), so each $h$ coincides with its differential $dh_{e_0}$ and preserves $D$---and therefore $Q = D^{2/p}$. Differentiating $Q(hx)=Q(x)$ at $x = e_0$ gives $\nabla Q(e_0)\,h = \nabla Q(e_0)$, so each $h$ preserves $W = \ker \nabla Q(e_0)$; since $\nabla Q(e_0) = \tfrac{2}{p}\,D(e_0)^{2/p - 1}\,\nabla D(e_0)$ is a positive multiple of $\nabla D(e_0)$, this is the same subspace whether $W$ is read through $Q$ or through $D$.

Axiom~\ref{axiom:isometry}(3) supplies, for every nonzero $u \in W$, an element $h \in H$ with $dh_{e_0}(u) = -u$; since each $h \in H$ is linear, $dh_{e_0} = h$ and hence $hu = -u$.

\begin{corollary}
\label{cor:H-invariant-forms}
Every $H$-invariant symmetric trilinear form on $W$ is identically zero.
\end{corollary}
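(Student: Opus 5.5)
The plan is to evaluate an arbitrary $H$-invariant symmetric trilinear form on a repeated argument, and to use the reversing elements supplied by Step~2 to make that value change sign.

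Let $\tau$ be a symmetric trilinear form on $W$ that is $H$-invariant, meaning $\tau(hx,hy,hz)=\tau(x,y,z)$ for all $h\in H$ and $x,y,z\in W$. This makes sense because each $h\in H$ is linear and preserves $W$, as shown in Step~2. Consider the associated cubic $c(u):=\tau(u,u,u)$.

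\textbf{First}, $c$ vanishes on $W$. Fix a nonzero $u\in W$. Step~2, via Axiom~\ref{axiom:isometry}(3), supplies $h\in H$ with $hu=-u$. Invariance and trilinearity then give
\[
c(u)=\tau(hu,hu,hu)=\tau(-u,-u,-u)=-c(u),
\]
so $c(u)=0$. Trivially $c(0)=0$, so $c\equiv 0$ on $W$.

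\textbf{Second}, recover $\tau$ from $c$ by polarization. For a symmetric trilinear form,
\[
\tau(x,y,z)=\tfrac{1}{6}\bigl[c(x+y+z)-c(x+y)-c(x+z)-c(y+z)+c(x)+c(y)+c(z)\bigr].
\]
Equivalently, $c(sx+ty+rz)$ is a polynomial in $(s,t,r)$, and $6\tau(x,y,z)$ is its coefficient of $str$. Since $c$ vanishes identically on $W$, every term on the right is zero, so $\tau\equiv0$.

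I do not expect a real obstacle here. The one point to keep straight is that the reversing element $h$ is allowed to depend on $u$. That is harmless, because the argument only ever uses the diagonal values $\tau(u,u,u)$, and each of these is killed by its own $h$. This is exactly why the argument goes through the cubic $c$ and polarization: it avoids needing a single $h$ that reverses all of $W$ at once.
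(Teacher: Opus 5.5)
Your proposal is correct and follows the paper's proof: you form the cubic $c(u)=\tau(u,u,u)$, kill it at each $u$ using the ($u$-dependent) reversing element $h\in H$ with $hu=-u$, and recover $\tau\equiv 0$ by polarization. Your explicit polarization identity is correct, and spelling it out, together with the remark that $h$ may depend on $u$, only makes explicit what the paper leaves implicit.
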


\begin{proof}
Let $T$ be such a form and set $f(u) := T(u,u,u)$; then $f$ is $H$-invariant and odd, $f(-u) = -f(u)$. Fix a nonzero $u \in W$. By the remark above there is $h \in H$ with $hu = -u$, whence
\[
f(u) = f(hu) = f(-u) = -f(u),
\]
so $f(u) = 0$. Hence $T(u,u,u) = 0$ for all $u \in W$, and by polarization $T \equiv 0$.
\end{proof}

\medskip
\noindent\textbf{Step 3: The third derivative of $Q$ vanishes at $e_0$.}

Let $T: \mathbb{R}^n \times \mathbb{R}^n \times \mathbb{R}^n \to \mathbb{R}$ denote the third derivative tensor of $Q$ at $e_0$:
\[
T(u, v, w) := D^3 Q(e_0)[u, v, w].
\]

Since $Q(hx) = Q(x)$ for all $h \in H$, differentiating three times at $x = e_0$ gives:
\begin{equation}
T(hu, hv, hw) = T(u, v, w) \quad \text{for all } h \in H.
\label{eq:T-H-invariant}
\end{equation}

\textbf{Step 3.1: $T|_{W \times W \times W} = 0$.}

The restriction of $T$ to $W^3$ is an $H$-invariant symmetric trilinear form on $W$. By Corollary~\ref{cor:H-invariant-forms}:
\begin{equation}
T(u, v, w) = 0 \quad \text{for all } u, v, w \in W.
\label{eq:T-WWW}
\end{equation}

\textbf{Step 3.2: $T$ vanishes whenever $e_0$ is one of the arguments.}

Since $Q$ is $2$-homogeneous, $\mathrm{Hess}\,Q$ is homogeneous of degree $0$, i.e.\ constant along rays:
\[
\mathrm{Hess}\,Q(\lambda e_0) = \mathrm{Hess}\,Q(e_0) \qquad (\lambda > 0).
\]
Differentiating in $\lambda$ at $\lambda = 1$ gives $D^3 Q(e_0)[e_0, \cdot, \cdot] = 0$ (Euler's identity for the degree-$0$ tensor $\mathrm{Hess}\,Q$). By symmetry of $T$, the same holds with $e_0$ in any slot:
\begin{equation}
T(e_0, y, z) = 0 \quad \text{for all } y, z \in \mathbb{R}^n.
\label{eq:T-e0}
\end{equation}
In particular this covers $T(e_0, u, v)$, $T(e_0, e_0, u)$, and $T(e_0, e_0, e_0)$.

\textbf{Step 3.3: Conclusion: $T \equiv 0$ on $\mathbb{R}^n$.}

By the decomposition $\mathbb{R}^n = \mathbb{R} e_0 \oplus W$, every vector is $a e_0 + w$. By trilinearity, $T(x,y,z)$ expands into terms each of which has all three arguments in $W$ (vanishing by (\ref{eq:T-WWW})) or has at least one argument equal to $e_0$ (vanishing by (\ref{eq:T-e0})). Hence:
\begin{equation}
D^3 Q(e_0) = T \equiv 0.
\label{eq:T-zero}
\end{equation}

\medskip
\noindent\textbf{Step 4: Propagate to all of $C$ and conclude.}

Because $G_0$ preserves $Q$, the third derivative transforms equivariantly:
\[
D^3 Q(gx)[u, v, w] = D^3 Q(x)[g^{-1}u, g^{-1}v, g^{-1}w].
\]
By transitivity of $G_0$ on $M$ (Lemma~\ref{lem:preserve-cone}), equation (\ref{eq:T-zero}) implies $D^3 Q(x) = 0$ for all $x \in M$.

By 2-homogeneity, $D^3 Q(\lambda x) = \lambda^{-1} D^3 Q(x)$, so $D^3 Q \equiv 0$ on the entire cone $C$.

Therefore $\mathrm{Hess}\, Q$ is constant on $C$: $\mathrm{Hess}\, Q(x) \equiv \mathcal{H}$ for some fixed symmetric matrix $\mathcal{H}$. Integrating twice:
\[
Q(x) = \tfrac{1}{2} x^T \mathcal{H} x + (\text{affine-linear terms}).
\]
But $Q$ is 2-homogeneous, forcing the affine-linear terms to vanish. Setting $S := \tfrac{1}{2}\mathcal{H}$:
\[
Q(x) = x^T S x \quad \text{for all } x \in C. \qedhere
\]
\end{proof}

\subsection{Forced nondegeneracy}

Theorem~\ref{thm:Q-quadratic} produces a symmetric matrix $S$ with $Q = x^T S x$ on $C$. Throughout the rest of this section we write
\[
q(x) := x^T S x,
\]
regarded as a single quadratic form defined on all of $\mathbb{R}^n$; thus $q = Q$ on the cone $C$, while $q$ extends to every $x \in \mathbb{R}^n$. Determinism now forces $S$ to be nondegenerate.

\begin{theorem}[Forced nondegeneracy]
\label{thm:forced-nondegeneracy}
The symmetric matrix $S$ with $Q(x) = x^T S x$ is nondegenerate.
\end{theorem}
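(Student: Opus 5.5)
We argue by contradiction: suppose $S$ has a nonzero kernel vector $k$, so $Sk = 0$, and derive a violation of the determinism axiom (Axiom~\ref{axiom:uniqueness}).

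On the cone $C$ we have $D = Q^{p/2} = q^{p/2}$, so
\[
\nabla D(x) = p\, q(x)^{p/2-1} S x \qquad (x \in C).
\]
Hence $\nabla D(x + tk) = p\,q(x)^{p/2-1} S x = \nabla D(x)$ whenever $x + tk \in C$. The identity $q(x+tk) = q(x)$ follows since $Sk = 0$ and $k^TSk = 0$. The cone $C$ is open, so this holds for $|t|$ small. Thus the momentum $\nabla D$ is constant along the affine segment $x_0 + s k$ in velocity space.

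Now fix $x_0$ and $v_0 \in C$, and choose a smooth scalar function $\phi$ with $\phi(0) = 0$, $\phi'(0) = 0$, $\phi''(0) \neq 0$, and $|\phi'|$ small. Consider the curve
\[
\gamma(t) = x_0 + t v_0 + \phi(t)\, k .
\]
Its velocity is $\dot\gamma = v_0 + \phi'(t) k$, which stays in $C$ and is nonzero. By the computation above, $\nabla D(\dot\gamma(t)) = \nabla D(v_0)$ is constant, so $\gamma$ satisfies the Euler--Lagrange equation and is a geodesic. It starts at $x_0$ with $\dot\gamma(0) = v_0$. However, its image is not the line $x_0 + \mathbb{R} v_0$: the $k$-component $\phi(t)$ is not linear in $t$. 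To make this precise, note that $k$ is not parallel to $v_0$ (since $q(v_0) > 0 = q(k)$), and that $\phi$ is nonconstant with $\phi'(0)=0$. This contradicts Axiom~\ref{axiom:uniqueness}, equivalently Lemma~\ref{lem:straight-lines}, so $S$ must be nondegenerate.

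The main obstacle is making sure the perturbed velocities stay inside the open cone $C$, where the formula $D = q^{p/2}$ is known to hold; we only know $Q = q$ on $C$. Openness of $C$ (it is a component of the open set $\Omega_+$) handles this if $\phi'$ is kept small, e.g.\ by taking $\phi(t) = \epsilon t^2$ on a short interval $[0,\delta]$. Two further points need care. First, the determinism axiom may be intended for geodesics on a whole interval; a local curve already suffices, since its image near $x_0$ is not contained in a single line while the axiom says all geodesics through $x_0$ in direction $v_0$ trace the same image. Second, one might prefer the cleaner choice $\phi(t) = \epsilon t^2$ directly, for which $\dot\gamma = v_0 + 2\epsilon t k$.

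The same argument applies verbatim to the case $\Omega_+ = \emptyset$ after passing to $-D$, as the paper indicates.
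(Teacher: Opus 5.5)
Your proof is correct and is essentially the paper's argument: the same curve $\gamma(t)=x_0+tv_0+\epsilon t^2 k$ with acceleration along the kernel of $S$, constant momentum $p\,q(v_0)^{p/2-1}Sv_0$ because $Sk=0$, openness of $C$ to keep $\dot\gamma$ in the cone for small $|t|$, and a contradiction with the determinism axiom. The only cosmetic difference is that you certify $k\notin\mathbb{R}v_0$ via $q(v_0)>0=q(k)$, whereas the paper uses $Sv_0\neq 0=Sk$; both are valid.
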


\begin{proof}
Suppose not: $S w_0 = 0$ for some $w_0 \neq 0$. Fix $v_0 \in C$ and set $\gamma(t) := x_0 + v_0 t + \tfrac{1}{2} w_0 t^2$, so $\dot\gamma(t) = v_0 + t\,w_0$. Since $S w_0 = 0$,
\[
q(\dot\gamma(t)) = q(v_0) + 2t\,v_0^T S w_0 + t^2 q(w_0) = q(v_0) > 0,
\]
so $\dot\gamma(t)$ has constant positive interval $q(v_0)$. Because $C$ is open and $\dot\gamma(0) = v_0 \in C$, there is an interval $t \in (-\delta, \delta)$ on which $\dot\gamma(t) \in C$; there $D = q^{p/2}$, so
\[
\nabla D(\dot\gamma(t)) = p\,q(\dot\gamma(t))^{p/2 - 1}\, S\,\dot\gamma(t) = p\,q(v_0)^{p/2-1}\,S v_0
\]
is constant (the $S w_0$ term drops out). Thus $\gamma|_{(-\delta,\delta)}$ has constant momentum and is a geodesic with $\gamma(0)=x_0$, $\dot\gamma(0)=v_0$. Its image is a parabolic arc---genuinely curved, since $S v_0 \neq 0 = S w_0$ makes $w_0$ independent of $v_0$---not the straight line through $x_0$ in the direction $v_0$, contradicting Axiom~\ref{axiom:uniqueness}. Hence $S$ is nondegenerate.
\end{proof}

The signature of $S$ is left free: any nondegenerate symmetric form is admissible at this stage. The definite signatures give Euclidean geometry; the indefinite ones (including the Lorentzian signature $(1, n{-}1)$ and, when $n \geq 4$, ultrahyperbolic signatures such as $(2,2)$) are pinned down further in the next subsection, where determinism forces $p = 2$.

\subsection{Extension to all of $\mathbb{R}^n$}
\label{sec:extension}

Theorem~\ref{thm:Q-quadratic} determines $D$ only on the cone $C$, where $D = q^{p/2}$ with $q(x)=x^T S x$ and $S$ nondegenerate. We now extend this to all of $\mathbb{R}^n$. The definite and indefinite signatures behave differently, and only the indefinite one requires work.

\paragraph{Definite case.}
If $S$ is definite, then $q$ has constant sign, so $\{q > 0\} = \mathbb{R}^n \setminus \{0\}$ is connected and equals $C$. Theorem~\ref{thm:Q-quadratic} already gives $D = q^{p/2}$ on all of $\mathbb{R}^n \setminus \{0\}$, smooth for every $p > 0$. Nothing further is required.

\paragraph{Indefinite case.}
When $S$ is indefinite the cone $C$ is a proper open subset of $\mathbb{R}^n \setminus \{0\}$, and its boundary lies on the null cone of $S$. Smoothness of $D$ across that boundary forces $p = 2$, after which a single quadratic form is pinned down everywhere.

\begin{lemma}[Global form]
\label{lem:global-form}
If $S$ is indefinite, then $p = 2$ and $D(x) = x^T S x$ on all of $\mathbb{R}^n$.
\end{lemma}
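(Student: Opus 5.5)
The plan has three steps: show $p=2$ from smoothness across the null cone, obtain the quadratic form on every sign region, and finally identify all the resulting constants with a single form.

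\textbf{Step 1: $p=2$.} On $C$ we have $D = q^{p/2}$ with $q(x)=x^TSx$ and $S$ nondegenerate and indefinite. Since $C$ is a proper open cone, its boundary in $\mathbb{R}^n\setminus\{0\}$ is nonempty and lies in $\{q=0\}$. Nondegeneracy gives $\nabla q(z) = 2Sz \neq 0$ for every nonzero null vector $z$. So near a boundary point $z_0 \in \partial C$ I will choose a straight segment $z_0 + s\,\xi$ with $\nabla q(z_0)\cdot\xi > 0$. Along it $q$ behaves like $c s$ with $c>0$, and the segment lies in $C$ for small $s>0$; $z_0$ can be chosen so that this holds, for instance by taking $z_0$ as the first exit point of a segment from $e_0$.

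Then $D(z_0+s\xi) = (cs + O(s^2))^{p/2}$ for $s>0$. Smoothness of $D$ at $z_0 \neq 0$ makes $s\mapsto D(z_0+s\xi)$ a $C^\infty$ function of $s$ near $0$. Its right-hand Taylor behaviour $c^{p/2}s^{p/2}(1+O(s))$ is only compatible with smoothness if $p/2$ is a nonnegative integer. Since $p>0$, this gives $p/2 \in \{1,2,3,\dots\}$, i.e.\ $p\in\{2,4,6,\dots\}$.

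To exclude $p\ge 4$ I would use determinism. With $D=q^{k}$ and $k\ge 2$ on $C$, $\nabla D$ vanishes on the null boundary, and by continuity $D(z_0)=0$ and $\nabla D(z_0)=0$. Then I would test a null direction $v_0=z_0$. The straight line is a geodesic, but so is any curve with velocity staying in the set where $\nabla D$ equals the constant $0$. Whether such curves exist depends on knowing $D$ on the other side of the cone, which is not yet known. So the first fallback is to do Step 2 below for general $p$ first, after which $D$ is a multiple of $|q|^{p/2}$ near the null cone. Now take a curve $\dot\gamma(t)=z_0+t w$ with $w$ null and $S$-orthogonal to $z_0$. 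Such $w$, independent of $z_0$, exists when $n\ge3$, since the null cone contains a totally isotropic line plus more; if it does not, I would use $w$ with $z_0^TSw=0$ and $q(w)=0$, available in the Lorentzian case only if $w\parallel z_0$, which needs rechecking. For this curve $q(\dot\gamma)\equiv 0$, so $\nabla D(\dot\gamma)=0$ is constant when $p>2$. This gives a curved geodesic with initial direction $z_0$, contradicting Axiom~\ref{axiom:uniqueness}. In the Lorentzian case, where null isotropic planes do not exist, I would instead argue directly from the Taylor expansion: smoothness of $q^{k}$ across the null cone forces the continuation $D = \epsilon q^k$ on the other side, and $\nabla D = k q^{k-1}\cdot 2Sx$ vanishes to order $k-1$ on the null cone. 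I expect this exclusion of $p\ge4$ to be the main obstacle.

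\textbf{Step 2: form on every sign region.} With $p=2$, apply Theorem~\ref{thm:Q-quadratic} to each component of $\Omega_+$, and to each component of $\Omega_-$ via $-D$ (the analogue of Lemma~\ref{lem:negative-case}). Each component carries its own quadratic form $D = x^TS_i x$. Two such forms agree to infinite order across a shared null boundary hypersurface, because $D$ is smooth there and quadratic forms are determined by their germ on an open set once smoothness glues the Taylor series. Hence $S_i=S$ for all adjacent regions.

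\textbf{Step 3: gluing.} Finally I need the regions $\{q>0\}$ and $\{q<0\}$ to exhaust $\mathbb{R}^n$ minus the null cone, which has empty interior. Then continuity gives $D=x^TSx$ everywhere, with evenness (Axiom~\ref{axiom:isotropy}) handling the antipodal component of $C$.
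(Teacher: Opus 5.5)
There is a genuine gap in Step~1, at the point you flag yourself: you do not establish the exclusion of $p\ge 4$. In the Lorentzian signature $(1,n-1)$, which is the case that matters most, none of your fallbacks produces a contradiction.

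\begin{itemize}
\item Your main construction, $\dot\gamma(t)=z_0+tw$ with $w$ null and $S$-orthogonal to $z_0$, needs a totally isotropic $2$-plane through $z_0$. Contrary to your remark about $n\ge 3$, a Lorentzian form has Witt index $1$ in every dimension, so no such $w\notin\mathbb{R}z_0$ exists.
\item Your Lorentzian fallback (``$\nabla D$ vanishes to order $k-1$ on the null cone'') only restates that the momentum vanishes there. Vanishing momentum does not conflict with Axiom~\ref{axiom:uniqueness} until you exhibit a second, non-straight geodesic with the same initial data.
\item Your other fallback, doing Step~2 first, is circular as written, since Step~2 opens with ``With $p=2$''.
\item Even in the ultrahyperbolic case you need $\nabla D=0$ along $z_0+tw$. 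You know this only if those velocities lie in $\partial C$, or if $D$ is already known off $\overline C$, and you establish neither.
\end{itemize}

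The missing idea is that the velocity need not move along a straight line in velocity space. You already have $\nabla D\equiv 0$ on all of $\partial C$, from the $C$ side alone, because $D$ is $C^1$ off the origin. So, contrary to your worry, nothing about $D$ beyond $\overline C$ is needed. Take a boundary point $\nu_0$; every nonzero null vector is regular because $S$ is nondegenerate. Choose a small neighbourhood $U$ with $U\cap\{q>0\}$ connected. A connectedness argument then gives $\partial C\cap U=\{q=0\}\cap U$. This is a smooth hypersurface of dimension $n-1\ge 2$, and its tangent space $(S\nu_0)^\perp$ contains some $w\notin\mathbb{R}\nu_0$. Take any smooth curve $\nu(t)$ in $\partial C$ with $\nu(0)=\nu_0$ and $\nu'(0)=w$, and set $\gamma(t)=x_*+\int_0^t\nu(s)\,ds$. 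Its momentum is identically $0$, so $\gamma$ is a geodesic with initial velocity $\nu_0$. Its velocity direction changes, so it is not the line $x_*+t\nu_0$, which is also a geodesic. This works uniformly in every indefinite signature, Lorentzian included, and it is how the paper rules out $m\ge 2$.

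Your Steps~2--3 then follow the paper's gluing in spirit. You should, however, justify that $\{D<0\}$ actually abuts $C$. Once $p=2$, $C^1$-continuity from $C$ gives $\nabla D(x_0)=2Sx_0\neq 0$ at any $x_0\in\partial C$, so $D$ changes sign there. That is what makes your ``shared null boundary'' nonempty and lets the Hessian comparison run.
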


\begin{proof}
\emph{The boundary of $C$ is null.} Since $S$ is indefinite, $q$ takes negative values, so $C \subseteq \{q > 0\}$ is a proper subset of $\mathbb{R}^n \setminus \{0\}$ and $\partial C \neq \emptyset$. A boundary point $x_0 \in \partial C$ is a limit of points of the open set $\{D > 0\}$ but does not belong to it, so $D(x_0) = 0$ by continuity; and on $C$, $D = q^{p/2} \to 0$ forces $q(x_0) = 0$. Thus $\partial C \subseteq \{q = 0\}$.

\emph{Smoothness forces $p = 2$.} Pick $x_0 \in \partial C$ with $\nabla q(x_0) = 2 S x_0 \neq 0$ (such points are dense in $\{q = 0\}$ as $S$ is nondegenerate). Then $s := q$ is one coordinate of a smooth chart at $x_0$, with $\{s > 0\}$ locally inside $C$, where $D = s^{p/2}$. For $D$ to be $C^\infty$ at $x_0$, the exponent $p/2$ must be a positive integer $m$ (otherwise $s^{p/2}$ fails to be $C^\infty$ at $s = 0$); so $D = q^m = (x^T S x)^m$ on $C$.

It remains to show $m = 1$; suppose $m \geq 2$. On the open cone $C$ the momentum of the Lagrangian $L = D = q^m$ is
\[
\nabla D(v) = m\,q(v)^{m-1}\,\nabla q(v) = 2m\,q(v)^{m-1}\,S v.
\]
Since $D$ is $C^1$ on $\mathbb{R}^n \setminus \{0\}$ (Axiom~\ref{axiom:smoothness}) and $q(v) \to 0$ as $v \to \partial C$ while $S v$ stays bounded, letting $v \to \nu$ through $C$ gives, because $m - 1 \geq 1$,
\[
\nabla D(\nu) = \lim_{v \to \nu,\, v \in C} 2m\,q(v)^{m-1}\,S v = 0 \qquad \text{for every } \nu \in \partial C.
\]
So the momentum vanishes on the whole null boundary $\partial C$. Pick a regular boundary point $\nu_0 \in \partial C$ with $S \nu_0 \neq 0$ (such points are dense, as above). Near $\nu_0$ the null cone $\{q = 0\}$ is a smooth hypersurface of dimension $n - 1 \geq 2$. We first check that $\partial C$ coincides with $\{q = 0\}$ near $\nu_0$. Choose a neighborhood $U$ of $\nu_0$ small enough that $V := U \cap \{q > 0\}$ is connected. Since $C$ accumulates at $\nu_0 \in \partial C$, the set $C \cap U = C \cap V$ is nonempty and open in $V$; it is also closed in $V$, for a point of $V$ approached by $C$ but not in $C$ would lie in $\partial C \subseteq \{q = 0\}$, impossible in $V$. By connectedness $C \cap U = V = U \cap \{q > 0\}$, so $\partial C \cap U = U \cap \{q = 0\}$ and the local boundary curve below indeed lies in $\partial C$. Its tangent space at $\nu_0$ is $(S\nu_0)^\perp$, which contains $\nu_0$ (as $q(\nu_0) = 0$) but, having dimension $\geq 2$, also a vector $w \notin \mathbb{R}\nu_0$. Take a smooth curve $\nu : (-\delta, \delta) \to \partial C$ with $\nu(0) = \nu_0$ and $\nu'(0) = w$. Then
\[
\gamma(t) = x_* + \int_0^t \nu(s)\,ds, \qquad \dot\gamma(t) = \nu(t),
\]
has null velocity throughout, so its momentum $\nabla D(\dot\gamma(t)) \equiv 0$ is constant and $\gamma$ is a geodesic. The straight line $\sigma(t) = x_* + t\,\nu_0$ shares the initial data $\sigma(0) = \gamma(0) = x_*$, $\dot\sigma(0) = \dot\gamma(0) = \nu_0$ and is also a geodesic (its constant velocity $\nu_0 \in \partial C$ has momentum $0$). Were $\gamma$ that same line, its velocity direction would be constant, $\nu(t) \in \mathbb{R}\nu_0$, forcing $\nu'(0) = w \in \mathbb{R}\nu_0$ and contradicting the choice of $w$. So $\gamma$ traces a curve, not a line, while sharing initial data with $\sigma$---contradicting Axiom~\ref{axiom:uniqueness}. Hence $m = 1$: $p = 2$ and $D = q = x^T S x$ on $C$.

\emph{The quadratic form spreads to all of $\mathbb{R}^n$.} With $p = 2$ we have $Q = D^{2/p} = D$, so Theorem~\ref{thm:Q-quadratic} gives $D^3 D \equiv 0$ on $C$, in particular on $C \cap \{D = 1\}$. Each origin-fixing $A \in G_0$ is linear with $D \circ A = D$, so the third derivative transforms equivariantly, $D^3 D(Ax)[Au, Av, Aw] = D^3 D(x)[u, v, w]$; hence $D^3 D$ vanishes at $Ax$ whenever it vanishes at $x$. By Axiom~\ref{axiom:isometry}(2), $G_0$ is transitive on $\{D = 1\}$, so $D^3 D \equiv 0$ there. The third-derivative tensor $D^3 D$ is homogeneous of degree $p - 3$ (each derivative lowers the degree of the degree-$p$ function $D$ by one), so $D^3 D(\lambda x) = \lambda^{p-3} D^3 D(x)$ for $\lambda > 0$; since this factor never vanishes, a zero of $D^3 D$ propagates along its entire ray. Every ray of the cone $\{D > 0\}$ meets $\{D = 1\}$, so $D^3 D \equiv 0$ on all of $\{D > 0\}$, and $\mathrm{Hess}\,D$ (its second derivative) is therefore constant there, equal to its value $2S$ on $C$.

This covers only the positive cone; the negative region requires a separate pass. First, $\{D < 0\}$ is nonempty. Were $D \geq 0$ everywhere, then a boundary point $x_0 \in \partial C$---which exists and is null by the first step---would be a global minimum of $D$, so its Hessian would be positive semidefinite, $\mathrm{Hess}\,D(x_0) \succeq 0$. But $\mathrm{Hess}\,D = 2S$ throughout $\{D > 0\}$, and continuity at $x_0 \in \partial C$ gives $\mathrm{Hess}\,D(x_0) = 2S$, which is \emph{indefinite} (having both positive and negative eigenvalues, since $S$ is). A semidefinite matrix cannot be indefinite---contradiction---so $\{D < 0\} \neq \emptyset$. Now $-D$ also satisfies all the axioms: it is smooth off the origin, even, homogeneous of degree $2$, has the same stationary curves (negating the action functional does not move its critical points) and hence the same determinism, and shares the isometry group and level sets of $D$. Its positive cone is $\{-D > 0\} = \{D < 0\}$, now known nonempty. Because $p = 2$ was established above, the degree-one normalization of $-D$ is just $(-D)^{2/p} = -D$ itself, so Theorem~\ref{thm:Q-quadratic} applies to $-D$ directly. Running it and the spreading argument of the previous paragraph with $-D$ in place of $D$ then yields a symmetric matrix $\tilde S$ with $\mathrm{Hess}(-D) \equiv 2\tilde S$, i.e.\ $\mathrm{Hess}\,D \equiv -2\tilde S$, constant on $\{D < 0\}$. At a null point $x_0$ with $2 S x_0 \neq 0$, continuity of the gradient gives $\nabla D(x_0) = \lim_{x \to x_0,\, x \in C} 2 S x = 2 S x_0 \neq 0$, so $D$ is regular there and both $\{D > 0\}$ and $\{D < 0\}$ accumulate at $x_0$; continuity of $\mathrm{Hess}\,D$ then forces $-2\tilde S = 2S$. Hence $\mathrm{Hess}\,D = 2S$ on $\{D > 0\}$, on $\{D < 0\}$, and---by continuity---on the null cone, i.e.\ on all of $\mathbb{R}^n \setminus \{0\}$. Integrating and using $2$-homogeneity, $D(x) = x^T S x$ everywhere.

The obstruction to $m \geq 2$ is structural: the momentum $2m\,q^{m-1} S v$ vanishes identically on the null cone $\{q = 0\}$, so the gradient cannot distinguish one null direction from another and fails to determine the worldline there. Only $m = 1$ keeps the momentum $2 S v$ nonzero across the cone, so that inertial motion stays determined.
\end{proof}

\subsection{Main result}
\label{sec:from-Q-to-D}

\begin{theorem}[Main result]
\label{thm:main}
Under Axioms~\ref{axiom:smoothness}--\ref{axiom:isometry}, the interval function has the form
\[
D(v) = C\,(v^T S v)^{p/2},
\]
where $S$ is a nondegenerate symmetric matrix, $C \neq 0$ is a constant, and $p > 0$. When $\Omega_+ \neq \emptyset$ one has $C > 0$; a negative constant $C < 0$ occurs only in the negative-definite case $\Omega_+ = \emptyset$ (Lemma~\ref{lem:negative-case}). If $S$ has indefinite signature, then $p = 2$, so that $D(v) = C\,v^T S v$ is exactly a quadratic form.
\end{theorem}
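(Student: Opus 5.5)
The plan is to assemble the main result from the pieces already in place, splitting on whether $\Omega_+$ is empty.

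\emph{Case $\Omega_+ \neq \emptyset$.} Fix a component $C$ of $\Omega_+$ and a basepoint $e_0 \in C$ normalized to $D(e_0) = 1$. Theorem~\ref{thm:Q-quadratic} gives $Q = D^{2/p} = x^T S x$ on $C$. Theorem~\ref{thm:forced-nondegeneracy} shows that $S$ is nondegenerate. Since $Q = q > 0$ on $C$, the form $q$ takes a positive value, so $S$ is either positive definite or indefinite.

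\begin{itemize}
\item If $S$ is positive definite, the definite-case paragraph of Section~\ref{sec:extension} applies: $\{q>0\} = \mathbb{R}^n\setminus\{0\} = C$. Hence $D = q^{p/2}$ on $\mathbb{R}^n\setminus\{0\}$, and also at $0$ since both sides vanish there (Lemma~\ref{lem:absolute-homogeneity}). This is the statement with $C = 1$ and arbitrary $p > 0$.
\item If $S$ is indefinite, Lemma~\ref{lem:global-form} gives $p = 2$ and $D(x) = x^T S x$ on all of $\mathbb{R}^n$, which is again the form claimed with $C = 1$.
\end{itemize}

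Thus in this case $C > 0$ after absorbing the normalization. The apparent arbitrariness of the constant is only the choice $D(e_0)=1$; rescaling $S$ by a positive factor lets us write $D = C\,(v^TSv)^{p/2}$ with any $C > 0$.

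\emph{Case $\Omega_+ = \emptyset$.} Here $D \le 0$ everywhere, and $D \not\equiv 0$ by determinism, so $\Omega_- \neq \emptyset$. I would pass to $\tilde D := -D$ (this is the content of Lemma~\ref{lem:negative-case}). First check that $\tilde D$ satisfies every axiom:
\begin{itemize}
\item smoothness, translation and scale homogeneity, and evenness are immediate;
\item the Euler--Lagrange equations of $-D$ coincide with those of $D$, so the geodesics and hence determinism are unchanged;
\item the isometry group $G$ is the same, level sets correspond, and $\nabla\tilde D(e_0)$ is a nonzero multiple of $\nabla D(e_0)$, so the transverse space $W$ in Axiom~\ref{axiom:isometry}(3) is unchanged.
\end{itemize}
Then $\tilde D$ has nonempty positive region, and the previous case gives $\tilde D = (v^T\tilde S v)^{p/2}$ with $\tilde S$ nondegenerate. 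Since $\tilde D > 0$ off the origin (as $\Omega_+(\tilde D) = \Omega_-(D)$ and any null direction would create a boundary point), I expect $\tilde S$ to be definite. If instead $\tilde S$ were indefinite, Lemma~\ref{lem:global-form} applied to $\tilde D$ would give $\tilde D = v^T\tilde S v$, which takes negative values; then $D$ would take positive values, contradicting $\Omega_+ = \emptyset$. So $\tilde S$ is positive definite and $D = -(v^T\tilde S v)^{p/2}$, i.e.\ $C = -1$ with $S = \tilde S$. This shows that $C<0$ occurs only in this negative-definite situation.

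The main obstacle is not computational but bookkeeping. One must verify carefully that $-D$ satisfies the reversibility axiom with the same $H$ and $W$, and that the sign conventions for $C$ are consistent: $C>0$ exactly when $\Omega_+\neq\emptyset$, with $C<0$ forced into the definite branch. One should also confirm that the representation holds at $v=0$ and on the null cone, which Lemma~\ref{lem:global-form} and continuity already cover.
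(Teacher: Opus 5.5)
Your proposal is correct and follows the paper's own proof essentially step for step: Theorem~\ref{thm:Q-quadratic} and Theorem~\ref{thm:forced-nondegeneracy} on the cone, then the definite/indefinite split of Section~\ref{sec:extension} with Lemma~\ref{lem:global-form} forcing $p=2$, and finally the reduction of the case $\Omega_+=\emptyset$ to $-D$ exactly as in Lemma~\ref{lem:negative-case}, including the same contradiction that rules out an indefinite $\tilde S$. Your parenthetical claim that $\tilde D>0$ off the origin is neither justified as stated nor needed, since your subsequent contradiction argument already establishes that $\tilde S$ is positive definite.
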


\begin{proof}
Theorem~\ref{thm:Q-quadratic} gives $D = (v^T S v)^{p/2}$ on the cone $C$, and Theorem~\ref{thm:forced-nondegeneracy} shows $S$ is nondegenerate. In the definite case $C = \mathbb{R}^n \setminus \{0\}$, so this is already the global form, valid for any $p > 0$; different values of $p$ yield the same unparameterized geodesic images and the same isometry group, so the degree is a matter of convention (one may take $p = 2$). In the indefinite case Lemma~\ref{lem:global-form} extends the formula to all of $\mathbb{R}^n$ and forces $p = 2$, so $D$ is exactly the quadratic form $C\,v^T S v$. The constant $C > 0$ accounts for the normalization $D(e_0) = 1$ used above; this proves the theorem when $\Omega_+ \neq \emptyset$. The remaining case $\Omega_+ = \emptyset$ is Lemma~\ref{lem:negative-case}, which gives the same form with $C < 0$. In every case $C \neq 0$.
\end{proof}

\begin{remark}[The isometry group need not be connected]
\label{rem:disconnected}
The group $G$ is not assumed connected. In the indefinite case the level set $\{D = 1\} = \{q = 1\}$ can be disconnected---for the Lorentzian signature $(1, n{-}1)$ it is the two-sheeted timelike hyperboloid---yet Axiom~\ref{axiom:isometry}(2) requires $G_0$ to act transitively on it. A discrete isometry must therefore connect the components; one is always present, namely $-I$, which lies in $G_0$ by evenness. For $D = C\,v^T S v$ the full origin-fixing isometry group is $O(S)$, in which the assumed $G_0$ embeds as a subgroup large enough to meet the transitivity conditions of Axiom~\ref{axiom:isometry}.
\end{remark}

\subsection{Negative interval functions}
\label{sec:negative-case}

The theorem was proved under the standing hypothesis $\Omega_+ = \{D > 0\} \neq \emptyset$. The complementary case $\Omega_+ = \emptyset$---where $D \leq 0$ on all of $\mathbb{R}^n \setminus \{0\}$---reduces to it at once, because the isotropy axiom is anchored at any direction of \emph{nonzero} interval, of either sign.

\begin{lemma}[Negative case]
\label{lem:negative-case}
If $\Omega_+ = \emptyset$ and $D \not\equiv 0$, then $D(v) = C\,(v^T S v)^{p/2}$ for a constant $C < 0$ and a positive-definite symmetric matrix $S$.
\end{lemma}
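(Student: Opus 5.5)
The plan is to pass to $\tilde D := -D$ and check that $\tilde D$ satisfies every axiom used so far, with $\Omega_+(\tilde D) \neq \emptyset$, so that Theorem~\ref{thm:Q-quadratic}, Theorem~\ref{thm:forced-nondegeneracy} and the extension arguments apply verbatim.

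\textbf{Axioms for $-D$.} Smoothness, translational and scale homogeneity (with the same $f$, hence the same exponent $p$), and evenness are all preserved by negation. The Euler--Lagrange equation $\frac{d}{dt}\nabla D(\dot\gamma)=0$ is unchanged, so $\tilde D$ has the same geodesics and Axiom~\ref{axiom:uniqueness} holds for it. Isotropy also transfers with the same group $G$. Conditions (1) and (2) are invariant under negation. For (3), the reference direction $e_0$ only needs $D(e_0)\neq 0$, and $W=\ker\nabla D(e_0)=\ker\nabla\tilde D(e_0)$. Since $D\not\equiv 0$ and $D\le 0$, we have $\Omega_-(D)\neq\emptyset$, so $\Omega_+(\tilde D)\neq\emptyset$. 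We may therefore choose $e_0$ with $\tilde D(e_0)=1$.

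\textbf{Applying the positive-case results.} Fix a component $C$ of $\{\tilde D>0\}$. Theorem~\ref{thm:Q-quadratic} gives $\tilde D^{2/p}=x^TSx$ on $C$, and Theorem~\ref{thm:forced-nondegeneracy} shows that $S$ is nondegenerate. It remains to show that $S$ is positive definite.

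\textbf{Main step: ruling out indefinite $S$.} This is where I expect the real work to be. Suppose $S$ is indefinite. Lemma~\ref{lem:global-form} applied to $\tilde D$ then gives $\tilde D = x^TSx$ on all of $\mathbb{R}^n$. Because $S$ is indefinite, $\tilde D$ takes negative values, so $D$ takes positive values, contradicting $\Omega_+=\emptyset$. (This is exactly the step in the proof of Lemma~\ref{lem:global-form} showing that $\{D<0\}\neq\emptyset$; in our setting it reads $\{\tilde D<0\}\ne\emptyset$.) Negative definiteness is also impossible, since $q=\tilde D^{2/p}>0$ on $C$. Hence $S$ is positive definite. Then $\{q>0\}=\mathbb{R}^n\setminus\{0\}$ is connected, so $C$ is all of it, and the definite case of Section~\ref{sec:extension} yields $\tilde D=(v^TSv)^{p/2}$ everywhere.

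\textbf{Conclusion.} Restoring the sign and the normalization $\tilde D(e_0)=1$ (which absorbs a positive constant $c$) gives $D(v)=C\,(v^TSv)^{p/2}$ with $C=-c<0$ and $S$ positive definite, as claimed. One point to check carefully is that the proof of Lemma~\ref{lem:global-form} nowhere used $\Omega_+(D)\neq\emptyset$ for the original $D$ beyond what $\tilde D$ now supplies. An alternative route to the same contradiction uses Lemma~\ref{lem:global-form}'s own boundary argument: $\partial C$ would be null, and the Hessian $2S$ there would be indefinite at a global maximum of $\tilde D$, which is impossible.
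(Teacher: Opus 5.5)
Your proposal is correct and follows the paper's proof essentially step for step. You pass to $-D$, check that the axioms transfer with the same $p$, $G$, and $W$, and invoke the $\Omega_+\neq\emptyset$ results; you then rule out indefinite $S$ because Lemma~\ref{lem:global-form} would make $-D = x^TSx$ take negative values, contradicting $\Omega_+=\emptyset$. Your explicit exclusion of negative-definite $S$ (since $q>0$ on $C$) and the alternative Hessian-at-a-maximum route spell out steps the paper leaves implicit or handles inside Lemma~\ref{lem:global-form}.
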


\begin{proof}
Apply the already-proved $\Omega_+ \neq \emptyset$ case to $D' := -D$. Then $\{D' > 0\} = \{D < 0\} \neq \emptyset$, and $D'$ satisfies every axiom. Smoothness and even homogeneity of degree $p$ are immediate; determinism holds because scaling a Lagrangian by $-1$ leaves its stationary curves unchanged. The isometry conditions transfer verbatim: $D'$ has the same origin-fixing isometry group $G_0$, the same transverse spaces $W = \ker\nabla D'(e_0) = \ker\nabla D(e_0)$, and the same equal-interval relation $\{D'(u)=D'(v)\} = \{D(u)=D(v)\}$, so conditions (1)--(3) read identically for $D'$ and $D$.

Since $\{D' > 0\}\neq\emptyset$, the $\Omega_+ \neq \emptyset$ case gives $D'(v) = C'\,(v^T S v)^{p/2}$ with $C' > 0$ and $S$ nondegenerate, so $D = -D' = C\,(v^T S v)^{p/2}$ with $C = -C' < 0$. Finally $S$ is definite: if $S$ were indefinite, that case would force $p = 2$ and $D' = C'\,v^T S v$, which then takes negative values, giving $\{D > 0\} = \{D' < 0\} \neq\emptyset$ and contradicting $\Omega_+ = \emptyset$. Hence $S$ is positive-definite.
\end{proof}

\section{Conclusion}

We have shown that any invariant interval function $D$ satisfying smoothness, homogeneity, isotropy, and the determinism of inertial motion must take the form
\[
D(v) = C\,(v^T S v)^k
\]
where $S$ is a nondegenerate symmetric matrix, $C \neq 0$, and $k = p/2$. In the indefinite case, $k = 1$, so $D$ is exactly a nondegenerate quadratic form; in the definite (Euclidean) case, any $k > 0$ yields a valid interval function (different exponents share the same unparameterized geodesic images and isometries, so $k$ is a matter of convention).

The signature of $S$ is not further constrained: every nondegenerate symmetric form is realized by some interval function obeying the axioms. The definite signatures give Euclidean geometry; among the indefinite ones, the Lorentzian signature $(1, n{-}1)$ and its reverse yield special relativity, while ultrahyperbolic signatures such as $(2,2)$ are equally admissible when $n \geq 4$. Singling out the Lorentzian case from the other indefinite ones requires an additional principle beyond symmetry and determinism---a notion of causality, or a single distinguished time direction---which our axioms deliberately do not impose. What the axioms \emph{do} force, however, is the quadratic structure and the nondegeneracy.

\medskip
\noindent\textbf{From intervals to physics.}
With the form of $D$ now determined, we can recover the familiar language of physics. Every universe satisfying our axioms has \emph{a form of relativity}: the isometry group of $D$ mixes directions nontrivially, so an observer's decomposition into ``the direction of motion'' and ``everything else'' depends on the observer. What differs with the signature is the \emph{character} of this relativity.

In the Minkowski case (signature $(1, n{-}1)$), the isometry group is $O(1, n{-}1)$, whose elements include Lorentz boosts---hyperbolic rotations that mix an observer's direction of motion with the transverse directions. The zero set $\{D = 0\}$ defines null directions and an invariant speed: if particles travel along null directions, their speed is a property of the geometry, not the particle. This is Einstein's relativity in its familiar form: time dilation, length contraction, and the relativity of simultaneity all governed by hyperbolic geometry.

The Euclidean case admits a more interpretive reading. Here the analogy with relativity is formal rather than physical---there are no null directions, no causal structure, and no maximum speed---but the symmetry-theoretic skeleton is the same, and it is worth drawing out. In signature $(n,0)$ the isometry group is $O(n)$ and changes of observer are ordinary rotations. If one nonetheless designates a direction of travel as an observer's ``time'' axis, with the orthogonal directions as that observer's ``space,'' then durations, lengths, and the simultaneity of distinct events are all read off from the single interval function $D$---now the Euclidean (Pythagorean) form rather than the hyperbolic one. Because $O(n)$ acts transitively on directions, no such axis is privileged: a rotation carries one observer's choice to another's, mixing ``time'' and ``space,'' so different observers disagree on the decomposition and on the durations and lengths they assign---governed now by spherical rather than hyperbolic geometry. What persists from the Lorentzian case is thus not the causal content of relativity but its core structural feature: an observer-dependent split of spacetime, enforced by a transitive symmetry group.

Galilean physics---in which all observers agree on the passage of time---does not appear. It would require a degenerate interval function ($\det S = 0$); but isotropy already forces $D$ to be built from a quadratic form, and determinism then forces that form to be nondegenerate---a degenerate $S$ leaves inertial worldlines undetermined, exactly the pathology ruled out above. Galilean relativity is thus not a third option but a degenerate limiting case, excluded by our axioms.

Which signature describes \emph{our} universe is an empirical question---either null directions exist or they do not---but much of the architecture of relativity does not depend on the answer. Independently of signature, every geometry the axioms allow carries a principle of relativity, with each observer's split of directions into a ``time'' axis and the ``space'' transverse to it; the relativity of simultaneity, since observers with different time axes have different orthogonal simultaneity hyperplanes; and reciprocal time dilation, length contraction, and a non-Galilean composition of velocities, all reflecting the foreshortening of one observer's axes when projected onto another's. What the indefinite signature adds, and the definite signature lacks, is the causal layer: a null cone, an invariant limiting speed, a causal ordering of events, and the hyperbolic---rather than trigonometric---form of the transformations. Interestingly, many of the features that make relativity counterintuitive are thus consequences of a nondegenerate quadratic form and a transitive symmetry group.

\medskip
\noindent\textbf{Relation to prior frameworks.}
The idea that symmetry alone pins down geometry has a long history, and it is worth situating our result among its predecessors. The classical \emph{space-problem} of Helmholtz and Lie \cite{Helmholtz1868,Lie1893} asks which geometries admit ``free mobility'' of rigid bodies---a transitive isometry group whose point-stabilizers act transitively on directions---and answers that, in the Riemannian setting, only the constant-curvature spaces (Euclidean, spherical, hyperbolic) qualify; the modern classification of two-point homogeneous spaces is due to Wang \cite{Wang1952} and Tits \cite{Tits1955}.

A complementary tradition classifies not the geometry but the kinematical group directly. The Cayley--Klein scheme organizes the planar kinematics---Galilean, Euclidean, and Minkowskian---into a single family \cite{Yaglom1979}, and Bacry and L\'evy-Leblond classified the possible kinematical groups in $3+1$ dimensions by Lie-algebra deformation and contraction \cite{BacryLevyLeblond1968}. These classifications begin from a distinguished \emph{time} direction and rotational isotropy of a \emph{space} presupposed distinct from \emph{time}, and enumerate the admissible group laws. Our route is the reverse---a single interval primitive with no \emph{a priori} time, from which the group is recovered---and our determinism axiom removes the degenerate members of their list (the Galilean and Carrollian kinematics, whose interval forms are degenerate), leaving exactly the nondegenerate signatures.

Our primitive is markedly weaker than that of the classical space-problem: rather than a positive-definite Riemannian metric we assume only a homogeneous interval function $D$, which need not be positive, need not satisfy a triangle inequality, and need not even be a norm. Our isotropy hypothesis (Axiom~\ref{axiom:isometry}) is a free-mobility condition in the same spirit, and mildly lighter: at a fixed reference direction we require only that each transverse direction be \emph{reversible} by an isometry, not that the stabilizer act transitively on all of them. That the quadratic forms we recover satisfy not merely this reversibility but the full free-mobility isotropy is well-established: by Witt's theorem on quadratic forms \cite{Witt1937}, the orthogonal group $O(S)$ acts transitively on the nonzero vectors of each fixed $S$-length, in every signature, so the quadratic geometries we recover do realize the axiom---confirming the hypothesis is non-vacuous. Even in the positive-definite case, that a sufficiently transitive isometry group forces an inner-product---hence quadratic---structure is itself known: for normed spaces it follows from point-homogeneity of the isometries \cite{DamaseKhare2023}, with a parallel rigidity for symmetric Minkowski norms \cite{XuMatveev2022}. Our contribution drops positivity and the norm axiom and carries the rigidity into the indefinite signatures, where these norm-based arguments do not apply.

Direction-dependent ``length'' functions of this kind are the subject of \emph{Finsler geometry} \cite{BaoChernShen2000}; our $D$ is a Finsler-type structure of homogeneity degree $p$, and the quadratic forms we recover are the flat, maximally symmetric members of that broader class.

Finally, our Theorem~\ref{thm:isometries-affine} (isometries are affine) is reminiscent of the Mazur--Ulam theorem \cite{MazurUlam1932}, that every surjective isometry of a normed space is affine; but that theorem requires a genuine norm and does not apply here, since $D$ is not a norm. We instead obtain affineness from the variational structure---isometries preserve geodesics, which are straight lines---via the fundamental theorem of affine geometry. The novelty of the present work is thus less the catalogue of geometries---which is long-established---than the route to it: a single interval primitive, with no \emph{a priori} distinction between time and space, from which the symmetry group, the quadratic structure, and the split between definite (Euclidean) and indefinite (Lorentzian and ultrahyperbolic) geometries all emerge rather than being assumed; indefinite and non-normed intervals are treated on the same footing as the Euclidean ones.

\smallskip
The kinematic structure of relativity, in the end, follows from symmetry and determinism.

\newpage
\appendix
\part*{Appendices}

\section{Proof of the Fundamental Theorem of Affine Geometry}
\label{app:affine-geometry}

We prove Lemma~\ref{lem:lines-to-affine}: a continuous bijection $T: \mathbb{R}^n \to \mathbb{R}^n$ mapping straight lines to straight lines must be affine.

\begin{proof}
\textbf{Step 1: Reduce to the origin-fixing case.}

Let $c = T(0)$ and $A(x) := T(x) - c$. Then $A(0) = 0$ and $A$ still maps lines to lines (the image of any line under $T$ is a line, so translating by $-c$ preserves this). It suffices to show $A$ is linear.

\textbf{Step 2: Lines through the origin map to lines through the origin.}

Consider $\ell(t) = tv$ for $v \neq 0$. Since $A$ maps this line to a line through $A(0) = 0$, we have $A(tv) = \varphi_v(t) \cdot A(v)$ for some continuous function $\varphi_v: \mathbb{R} \to \mathbb{R}$ with $\varphi_v(0) = 0$ and $\varphi_v(1) = 1$.

We claim $\varphi_v(t) = t$ for all $t$ and all $v$; this is established in Step 4.

\textbf{Step 3: $A$ maps planes to planes, and preserves parallelism.}

\emph{Plane-preservation.} Let $\Pi$ be the plane through three non-collinear points $p_0, p_1, p_2$. Since $A$ is a line-preserving bijection, its inverse also maps lines to lines; hence $A$ carries non-collinear points to non-collinear points, and $A(p_0), A(p_1), A(p_2)$ span a plane $\Pi'$. We claim $A(\Pi) = \Pi'$. Any $x \in \Pi$ lies on a line through some vertex $p_i$ that meets the opposite side-line at a point $q$ (choose $i$ so this line is not parallel to that side). Both $p_i$ and $q$ lie in $\Pi$, and $A(p_i), A(q) \in \Pi'$ (the side-line maps into $\Pi'$, being a line through two of its spanning points). Thus the line through $p_i$ and $q$ maps to a line meeting $\Pi'$ in two points, hence lying in $\Pi'$, so $A(x) \in \Pi'$. Therefore $A(\Pi) \subseteq \Pi'$; applying the same argument to $A^{-1}$ gives $A(\Pi) = \Pi'$.

\emph{Parallelism.} Suppose lines $\ell(t) = a + tv$ and $\ell'(t) = b + tv$ are parallel (same direction $v$, different basepoints $a \neq b$). They lie in a common plane $\Pi$, so their images lie in the plane $\Pi' = A(\Pi)$. Within $\Pi'$ two lines either meet or are parallel, so it suffices to rule out a common point. If $A(\ell)$ and $A(\ell')$ met, there would exist $t_1, t_2$ with $A(a + t_1 v) = A(b + t_2 v)$; injectivity gives $a + t_1 v = b + t_2 v$, so $(t_1 - t_2)v = b - a$. If $t_1 = t_2$ then $a = b$, a contradiction; if $t_1 \neq t_2$ then $b - a$ is parallel to $v$, making the lines identical, again contradicting $a \neq b$. Hence the coplanar lines $A(\ell), A(\ell')$ do not meet and are therefore parallel.

\textbf{Step 4: $\varphi_v(t) = t$, and $A$ is additive on independent pairs.}

Fix linearly independent $u, v$. Two families of parallel lines pass through the point $u + tv$:
\begin{itemize}
\item the line $\{u + sv : s \in \mathbb{R}\}$, parallel to $\{sv\}$; by Step~3 its image is the line through $A(u)$ in direction $A(v)$, so $A(u + tv) = A(u) + \psi(t)\,A(v)$ for a continuous $\psi$ with $\psi(0) = 0$;
\item the line $\{su + tv : s \in \mathbb{R}\}$, parallel to $\{su\}$; by Step~3 its image is the line through $A(tv) = \varphi_v(t)A(v)$ in direction $A(u)$, so $A(u + tv) = \varphi_v(t)\,A(v) + \rho(t)\,A(u)$ for some $\rho(t)$.
\end{itemize}
Because $u, v$ are independent, $A$ maps the plane $\langle u, v\rangle$ onto the plane $\langle A(u), A(v)\rangle$ (Step~3), so $A(u), A(v)$ are linearly independent. Equating the two expressions for $A(u + tv)$ in this basis gives $\rho(t) = 1$ and
\[
\psi(t) = \varphi_v(t), \qquad\text{i.e.}\qquad A(u + tv) = A(u) + \varphi_v(t)\,A(v).
\]
In particular $\psi$ does not depend on the basepoint $u$.

Now $\varphi_v$ is additive: applying the identity $A(u' + tv) = A(u') + \varphi_v(t)A(v)$ with basepoint $u' = u + sv$ (still independent of $v$) gives
\[
A\big(u + (s+t)v\big) = A(u + sv) + \varphi_v(t)\,A(v) = A(u) + \big(\varphi_v(s) + \varphi_v(t)\big)A(v),
\]
while applying it directly with basepoint $u$ gives $A(u + (s+t)v) = A(u) + \varphi_v(s+t)A(v)$. Comparing coefficients of $A(v)$ yields $\varphi_v(s+t) = \varphi_v(s) + \varphi_v(t)$. A continuous additive function is linear, and $\varphi_v(1) = 1$, so $\varphi_v(t) = t$ for all $t$. Hence
\[
A(tv) = t\,A(v) \quad\text{for all } v, t, \qquad\text{and}\qquad A(u + v) = A(u) + A(v) \text{ for independent } u, v.
\]

\textbf{Step 5: Linearity.}

Homogeneity $A(tv) = t\,A(v)$ holds for every $v$ and $t$ (Step~4). Additivity holds for independent pairs by Step~4, and for dependent pairs it follows from homogeneity: $A(u + cu) = A((1+c)u) = (1+c)A(u) = A(u) + A(cu)$. Thus $A$ is additive and homogeneous, hence linear, and $T(x) = A(x) + c$ is affine.
\end{proof}

\end{document}